\newtheorem{theorem}{Theorem}[section]
\newtheorem{claim}[theorem]{Claim}
\newtheorem{corollary}[theorem]{Corollary}
\newtheorem{lemma}[theorem]{Lemma}
\newtheorem{definition}[theorem]{Definition}
\numberwithin{equation}{section}
\newcommand{\E}{\mathbf{E}}
\newcommand{\one}{\mathbf{1}}
\newcommand{\bbR}{\mathbb{R}}
\newcommand{\calE}{\mathcal{E}}
\newcommand{\calP}{\mathcal{P}}
\newcommand{\ball}{\operatorname{Ball}}
\newcommand{\diam}{\operatorname{diam}}
\newcommand{\disagree}{\operatorname{dis}}
\newcommand{\ainf}{\operatorname{A}_{\infty}}
\newcommand{\given}{\,|\,}
\title{Local Correlation Clustering with Asymmetric Classification Errors\footnote{The conference version of this paper appeared in the proceedings of ICML 2021.}}
\author{
    Jafar Jafarov\thanks{Equal contribution. Jafar Jafarov and Yury Makarychev were supported by NSF CCF-1718820, CCF-1955173, and NSF TRIPODS CCF-1934843/CCF-1934813.
Sanchit Kalhan and Konstantin Makarychev were supported by
NSF CCF-1955351 and NSF TRIPODS CCF-1934931.} \\ University of Chicago 
 \and
    Sanchit Kalhan\footnotemark[2] \\ Northwestern University 
 \and
    Konstantin Makarychev\footnotemark[2] \\ Northwestern University 
 \and
    Yury Makarychev\footnotemark[2] \\ Toyota Technological Institute at Chicago
}
\date{}
\begin{document}
 \maketitle
 \begin{abstract}
In the Correlation Clustering problem, we are given a complete weighted graph $G$ with its edges labeled as ``similar" and ``dissimilar" by a noisy binary classifier. For a clustering $\mathcal{C}$ of graph $G$, a similar edge is in disagreement with $\mathcal{C}$, if its endpoints belong to distinct clusters; and a dissimilar edge is in disagreement with $\mathcal{C}$ if its endpoints belong to the same cluster. The disagreements vector, $\disagree$, is a vector indexed by the vertices of $G$ such that the $v$-th coordinate $\disagree_v$ equals the weight of all disagreeing edges incident on $v$. The goal is to produce a clustering that minimizes the $\ell_p$ norm of the disagreements vector for $p\geq 1$. We study the $\ell_p$ objective in Correlation Clustering under the following assumption: Every similar edge has weight in the range of $[\alpha\mathbf{w},\mathbf{w}]$ and every dissimilar edge has weight at least $\alpha\mathbf{w}$ (where $\alpha \leq 1$ and $\mathbf{w}>0$ is a scaling parameter). We give an $O\left((\nicefrac{1}{\alpha})^{\nicefrac{1}{2}-\nicefrac{1}{2p}}\cdot \log\nicefrac{1}{\alpha}\right)$ approximation algorithm for this problem. Furthermore, we show an almost matching convex programming integrality gap.
\end{abstract}

\section{Introduction}
Grouping objects based on the similarity between them is a ubiquitous and important task in machine learning.
This similarity information between objects can be represented in many ways, some of them being pairwise distances between objects (objects which are closer are more similar) or the degree of similarity between pairs of objects (objects which are more similar have a higher degree of similarity).
\citet*{BBC04} introduced the Correlation Clustering problem, a versatile model that elegantly captures this task of grouping objects based on similarity information. Since its introduction, the correlation clustering problem has found use in a variety of applications, such as co-reference resolution (see e.g., \citet*{Cohen01learningto, Cohen02learningto}), spam detection (see e.g., \citet{Ramachandran07filteringspam}, \citet{BonchiGL14}), image segmentation (see e.g., \citet{Wirth10}) and multi-person tracking (see e.g., \citet{tang2016multi, tang2017multiple}). In the Correlation Clustering problem, we are given a set of objects with pairwise similarity information. Our goal is to partition the objects into clusters that agree with this information \textit{as much as possible}. The pairwise similarity information is given as a weighted graph $G$ with edges labeled as either ``positive/similar" or as ``negative/dissimilar" by a noisy binary classifier. For a clustering $\mathcal{C}$, a positive edge is in disagreement with $\mathcal{C}$, if its endpoints belong to distinct clusters; and a negative edge is in disagreement with $\mathcal{C}$ if its endpoints belong to the same cluster.

To ascertain the quality of the clustering produced, \citet{BBC04} studied the Correlation Clustering problem under two complimentary objectives. Over the years, the objective that has received the most attention is to find a clustering that minimizes the total weight of edges in disagreement. For the case of complete unweighted graphs, \citet{BBC04} gave a constant factor approximation algorithm for this objective. \citet*{ACN08} improved the approximation ratio to $3$ by presenting a simple-yet-elegant combinatorial algorithm. They also presented a $2.5$-approximation algorithm based on Linear Programming (LP) rounding which was later derandomized without any loss in approximation ratio by~\citet*{vZHJW}. Finally, \citet*{CMSY15} gave an LP rounding algorithm which improved the approximation ratio to $2.06$. The standard LP was shown to have an integrality gap of $2$ by \citet*{CGW03} for the case of complete unweighted graphs. For the case of general weighted graphs, \citet{CGW03} and \citet*{DEFI06} gave an $O(\log n)$-approximation algorithm.


Define the disagreements vector to be a vector indexed by the vertices of $G$. Given a clustering $\calP$, $\disagree(\calP, E^+, E^-) \in \bbR^V$ is a $|V|$-dimensional vector
where the $u$-th coordinate is equal to the weight of disagreements at $u$ with respect to $\calP$. That is, $\disagree_u(\calP, E^+, E^-)= \sum\limits_{(u,v) \in E}w_{uv}\cdot \one\{(u,v)\text{ is in disagreement with }\calP\}.$
Thus, minimizing the total weight of disagreements is equivalent to finding a clustering minimizing the $\ell_1$ norm of the disagreements vector. Another objective for Correlation Clustering that has received attention recently is to minimize the weight of disagreements at the vertex that is worst off (also known as Min Max Correlation Clustering). This is equivalent to finding a clustering that minimizes the $\ell_{\infty}$ norm of the disagreements vector. Observe that minimizing the $\ell_1$ norm is a global objective since the focus is on minimizing the total weight of disagreements. In contrast, for higher values of $p$ (particularly $p=\infty$), minimizing the $\ell_{p}$ norm becomes a more local objective since the focus shifts towards minimizing the weight of disagreements at a single vertex. Minimizing the $\ell_2$ norm of the disagreements vector can thus provide a balance between these global and local perspectives -- it considers the weight of disagreements at all vertices but penalizes vertices that are worse off more heavily. The following scenario is a showcase that minimizing the $\ell_2$ norm might be a more suitable objective than minimizing the $\ell_1$ norm. Consider a recommender system such that input is a bipartite graph with left and right sides representing customers and services, respectively. A positive edge implies that a customer is satisfied with the service; whereas a negative edge implies that they are dissatisfied with or have not used the service. We may be interested in grouping customers and services so that the total and the individual dissatisfaction of customers are minimized.

\begin{definition}\label{def:local}\textbf{(Local Correlation Clustering)}
Given an instance of Correlation Clustering $G = (V, E = E^+ \cup E^-)$ and $p\geq 1$, the local objective is to find a partitioning $\calP$ that minimizes the $\ell_p$ norm.
%
\end{definition}

We use the standard definition of the $\ell_p$ norm of a vector $x$: $\|x\|_p = \left(\sum_u |x_u|^p\right)^\frac{1}{p}$. Since its introduction by~\citet*{PueMil18}, local objectives for Correlation Clustering have been mainly studied under two models (see~\citet*{CharikarGS17}, \citet*{AhmadiKS19}, \citet*{KalhanMZ19}). We will refer to these models as (1) Correlation Clustering on Complete Graphs, and (2) Correlation Clustering with Noisy Partial Information. In the first model, the input graph $G$ is complete and unweighted. For this model, the first approximation algorithm was by~\citet{PueMil18} with an approximation factor of $48$ for minimizing the $\ell_p$ norm. This was later improved to $7$ by~\citet{CharikarGS17}. Lastly,~\citet{KalhanMZ19} provided a $5$ approximation algorithm. In the second model, $G$ is an arbitrary weighted graph with possibly missing edges. For minimizing the $\ell_{\infty}$ norm of the disagreements vector in this model, \citet{CharikarGS17} provided a $O(\sqrt{n})$ approximation. \citet{KalhanMZ19} gave an $O(n^{\frac{1}{2}-\frac{1}{2p}}\cdot \log^{\frac{1}{2}+\frac{1}{2p}}n)$-approximation algorithm for minimizing the $\ell_p$ norm of the disagreements vector.

We study local objectives in a different model -- Correlation Clustering with Asymmetric Classification Errors -- recently introduced by~\citet*{JafarovKMM20}. In this model, the input graph $G$ is complete and weighted. Furthermore, the ratio of the smallest edge weight to the largest positive edge weight is at least $\alpha\leq 1$. Thus, for some $\mathbf{w}>0$, each positive edge weight lies in the interval $[\alpha\mathbf{w},\mathbf{w}]$ and each negative edge weight is at least $\alpha\mathbf{w}$. This model better captures the subtleties in real world instances than the standard models. Since real world instances rarely have equal edge weights, assumptions in the Correlation Clustering on Complete Graphs model are too strong. In contrast, in the Correlation Clustering with Noisy Partial Information model, we can have edge weights that are arbitrarily small or large, an assumption which is too weak. In many real world instances, the edge weights lie in some range $[a,b]$ with $a,b >0$. For this model, \citet{JafarovKMM20} gave a $(3+2\ln\frac{1}{\alpha})$ approximation for minimizing the $\ell_1$ norm of the disagreements vector.
\begin{definition}\label{def:alphaC}
Correlation Clustering with Asymmetric Classification Errors is a variant of Correlation Clustering on Complete Graphs. We assume that the weight of each positive edge lies in $[\alpha\mathbf{w},\mathbf{w}]$ and the weight of each negative edge lies in $[\alpha\mathbf{w},\infty)$, where $\alpha \in (0,1]$ and $\mathbf{w} >0$.
\end{definition}

\paragraph{Our Contributions.}
In this paper we study the task of minimizing local objectives (Definition~\ref{def:local}) under the Correlation Clustering with Asymmetric Classification Errors model (Definition~\ref{def:alphaC}).
Our main result is an $O\left(\left(\frac{1}{\alpha}\right)^{\frac{1}{2}-\frac{1}{2p}}\cdot \log\frac{1}{\alpha}\right)$ approximation algorithm for minimizing the $\ell_p$ norm of the disagreements vector, which we now state.
\begin{theorem}\label{thm:main}
There exists a polynomial-time $O\left(\left(\frac{1}{\alpha}\right)^{\frac{1}{2}-\frac{1}{2p}}\cdot \log\frac{1}{\alpha}\right)$-approximation algorithm for the $\ell_p$ objective in the Correlation Clustering with Asymmetric Classification Errors model.
\end{theorem}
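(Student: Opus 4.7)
The plan is to formulate a convex programming (CP) relaxation for the $\ell_p$ objective, round it via an iterative ball-carving scheme whose radius is drawn from a $1/r$ density on an interval whose endpoints scale with $\alpha$, and convert a per-vertex bound into the claimed $\ell_p$ guarantee. The relaxation introduces variables $x_{uv}\in[0,1]$ for all pairs $\{u,v\}$ satisfying the triangle inequality, sets $y_v := \sum_{(v,u)\in E^+}w_{vu}\,x_{vu}+\sum_{(v,u)\in E^-}w_{vu}(1-x_{vu})$, and minimizes $\|y\|_p$ over feasible $x$. Since $x\mapsto\|y(x)\|_p$ is convex on a polyhedral feasible region, the CP can be solved to arbitrary accuracy in polynomial time, producing a pseudo-metric $x^*$ and a fractional disagreement vector $y^*$ with $\|y^*\|_p\le\mathrm{OPT}$.

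For the rounding step I would run a classical iterative ball-carving procedure (in the spirit of Charikar-Guruswami-Wirth and the Jafarov-Kalhan-Makarychev-Makarychev variant developed for $\ell_1$ with asymmetric errors): at each stage pick a pivot $u$ from the still-unassigned vertices, draw a random radius $R$ from the density $f(r)\propto 1/r$ on an interval $[r_{\min},r_{\max}]$ with $r_{\min}$ chosen as a specific function of $\alpha$ and $p$ and $r_{\max}$ a small absolute constant, form the cluster $\{w\in V_{\mathrm{rem}} : x^*_{uw}\le R\}$, and recurse on the remaining vertices. The logarithmic density inflicts only an $O(\log(1/\alpha))$ multiplicative loss on the probability that a positive edge of CP-length $x$ is cut by a given ball. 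The key per-vertex claim to establish is that for every $v$,
\[
  \E[Z_v]\;\le\;O\bigl((1/\alpha)^{1/2-1/(2p)}\log(1/\alpha)\bigr)\cdot y_v^{*},
\]
where $Z_v$ is the disagreement at $v$ in the output clustering. This splits into two accountings: (a) positive edges $(v,u)$ cut by some ball, whose expected cost is charged to $w_{vu}\cdot x^*_{vu}\cdot\log(1/\alpha)/r_{\min}$ via the $1/r$ radius density; and (b) negative edges $(v,u)$ trapped in a common ball of radius $r$, whose cost is charged to the LP slack $w_{vu}(1-x^*_{vu})$ up to a factor $1/(1-2r_{\max})$ by the triangle inequality on $x^*$. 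Using that positive edges weigh at most $\mathbf{w}$ and negative edges at least $\alpha\mathbf{w}$, and then tuning $r_{\min}$ as a power of $\alpha$ that equalizes (a) and (b) in the worst case, yields the exponent $1/2-1/(2p)$ on $1/\alpha$.

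Finally, to lift a per-vertex bound in expectation to an $\ell_p$-norm bound on the disagreements vector, I would aim either for a $p$-th moment inequality $\E[Z_v^p]\le C^p\,(y_v^*)^p$ (which gives $\bigl(\E\|Z\|_p^p\bigr)^{1/p}\le C\,\|y^*\|_p$ by summation over $v$), or for a high-probability pointwise inequality $Z_v\le C\cdot y_v^*$ obtained by a deterministic post-processing that re-routes ``excess'' disagreement at $v$ onto LP-paying edges using the triangle inequality on $x^*$. The main obstacle is precisely this last step: because the ball-carving rounds share randomness across vertices, the $Z_v$'s are correlated, and a first-moment guarantee does not immediately give a $p$-th moment guarantee. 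I expect the resolution to involve conditioning on a high-probability ``typical'' event for each $v$, or exploiting the conditional independence of the disagreements contributed by the different pivots, so that per-vertex bounds can be summed as required to bound $\|Z\|_p$ against $\|y^*\|_p$ without losing an additional polynomial factor.
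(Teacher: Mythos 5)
Your proposal correctly identifies the convex relaxation, the strategy of bounding per-vertex disagreement against the CP cost $y_u$, and the observation that a first-moment bound per vertex is insufficient for $p>1$. However, there is a genuine gap at the central step, and the worry you flag (correlation of the $Z_v$'s across vertices) is actually not the real obstacle: by linearity of expectation $\E\bigl[\sum_v Z_v^p\bigr]=\sum_v\E[Z_v^p]$, cross-vertex dependence costs nothing. The real obstacle is purely per-vertex: how to control $\E[Z_v^p]$ given only control of $\E[Z_v]$. Neither of your two proposed resolutions (conditioning on typical events, or exploiting conditional independence across pivots) is developed far enough to close this, and neither is the mechanism the paper uses.

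The paper's resolution is a clean interpolation trick that hinges on the rounding scheme producing \emph{two} per-vertex guarantees simultaneously: a \emph{deterministic} (almost-sure) bound $Z_u\lesssim A_\infty\,y_u$ with $A_\infty=\sqrt{\nicefrac{1}{\alpha}}\log\nicefrac{1}{\alpha}$, and an \emph{expectation} bound $\E[Z_u]\lesssim A_1\,y_u$ with $A_1=\log\nicefrac{1}{\alpha}$. These combine as
\[
\E[Z_u^p]=\E[Z_u^{p-1}\cdot Z_u]\lesssim (A_\infty y_u)^{p-1}\,\E[Z_u]\lesssim A_\infty^{p-1}A_1\,y_u^p,
\]
which gives the factor $A=(A_\infty^{p-1}A_1)^{1/p}=O\bigl((\nicefrac{1}{\alpha})^{\frac12-\frac1{2p}}\log\nicefrac{1}{\alpha}\bigr)$ after applying Jensen. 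You do not have the almost-sure bound, and the standard ball-carving with a $1/r$ radius density is not designed to furnish one: nothing in that scheme prevents a single realization from cutting all $\Theta(1/\alpha)$ positive edges of CP-length $\approx r_{\min}$ around a vertex $v$, even though in expectation few are cut.

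The paper engineers the deterministic bound into its partitioning scheme. It does not use the CKR-style $1/r$ density at all. Instead it picks a pivot $z$ maximizing $|\ball(z,R_0)|$, branches on whether $\ball(z,R_1)$ is ``Heavy'' (take the deterministic ball $\ball(z,R_1)$) or ``Light'' (restrict the random radius to the sub-set $S\subset(3R_0,R_1]$ of radii that \emph{already} satisfy the worst-case property, proved to have measure $\geq R/2$, and sample from it via an exponential-type CDF $F(x)=\frac{1-e^{-x/R_0}}{1-e^{-R/(2R_0)}}$ transported through the measure-preserving map $\pi_S$). This design is what makes the always-true property (3) of the decomposition theorem hold, which in turn is what makes the interpolation $\E[Z_u^p]\leq(\sup Z_u)^{p-1}\E[Z_u]$ viable. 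Your proposal recognizes that something nonlinear is needed but lacks this specific mechanism; as written it would not complete.
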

For $p=1$, our algorithm provides an $O(\log\frac{1}{\alpha})$ approximation, which matches the approximation guarantee given by \citet{JafarovKMM20} up to constant factors.
Consider $p=2$, that is, the $\ell_2$ norm. If we ignored the edge weights and applied the state of the art algorithm in the Correlation Clustering on Complete Graphs model, we would get an $O(\frac{1}{\alpha})$ approximation. If we were to use the state of the art algorithm in the Correlation Clustering with Noisy Partial Information model, we would get an $\Tilde{O}\left(n^{\nicefrac{1}{4}}\right)$ approximation. However, by using our algorithm (Theorem~\ref{thm:main}), we obtain an $\Tilde{O}\big(\big(\nicefrac{1}{\alpha}\big)^{\nicefrac{1}{4}}\big)$ approximation, which is a huge improvement when $\nicefrac{1}{\alpha}\ll n$. 

\begin{corollary}
There exists a polynomial-time $O\big(\big(\nicefrac{1}{\alpha}\big)^{\nicefrac{1}{4}}\cdot \log\frac{1}{\alpha}\big)$-approximation algorithm for the $\ell_2$ objective in the Correlation Clustering with Asymmetric Classification Errors model.
\end{corollary}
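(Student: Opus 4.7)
The corollary follows directly by substituting $p=2$ into Theorem~\ref{thm:main}, so my ``proof plan'' is essentially to verify the arithmetic in the exponent and confirm that no additional argument is needed. The plan is to take the polynomial-time algorithm guaranteed by Theorem~\ref{thm:main} and instantiate it at $p=2$, noting that the algorithm's input specification only requires $p \geq 1$, so $p=2$ is a valid choice.

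The key step is to evaluate the exponent $\tfrac{1}{2} - \tfrac{1}{2p}$ at $p=2$:
\[
\frac{1}{2} - \frac{1}{2 \cdot 2} \;=\; \frac{1}{2} - \frac{1}{4} \;=\; \frac{1}{4}.
\]
Plugging this back into the approximation factor of Theorem~\ref{thm:main} yields $O\bigl((1/\alpha)^{1/4} \cdot \log(1/\alpha)\bigr)$, which matches the statement of the corollary. The algorithm's running time remains polynomial, since Theorem~\ref{thm:main} guarantees polynomial running time uniformly over $p \geq 1$ (in particular, for the fixed constant $p=2$).

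There is no real obstacle here since this is a direct corollary; the only subtlety worth verifying is that the constant hidden in the $O(\cdot)$ of Theorem~\ref{thm:main} does not blow up as $p \to 2$, but since the theorem is stated for all $p \geq 1$ with a single big-$O$, evaluating at a specific constant $p=2$ is immediate. I would simply include a one-line proof pointing back to Theorem~\ref{thm:main}.
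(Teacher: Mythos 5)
Your proposal is correct and matches the paper's derivation: the corollary is stated without a separate proof precisely because it is the direct instantiation of Theorem~\ref{thm:main} at $p=2$, and your exponent calculation $\tfrac{1}{2}-\tfrac{1}{2\cdot 2}=\tfrac{1}{4}$ is exactly the required verification.
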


Finally, we present the implication of our main result for the $\ell_\infty$ norm. For the $\ell_\infty$ norm, \citet{KalhanMZ19} presented an $\Tilde{O}(\sqrt{n})$ approximation under the Correlation Clustering with Noisy Partial Information model. Using our algorithm for Correlation Clustering with Asymmetric Classification Errors we obtain an $\Tilde{O}(\sqrt{\nicefrac{1}{\alpha}})$-approximation factor, which is a significant improvement to the approximation guarantee in this setting. 

\begin{corollary}
There exists a polynomial-time $O(\sqrt{\nicefrac{1}{\alpha}} \cdot \log \nicefrac{1}{\alpha})$-approximation algorithm for the $\ell_\infty$ objective in the Correlation Clustering with Asymmetric Classification Errors model.
\end{corollary}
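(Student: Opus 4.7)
The plan is to deduce this corollary from Theorem~\ref{thm:main} via the standard $\ell_p$-to-$\ell_\infty$ reduction at $p = \Theta(\log n)$. Formally substituting $p = \infty$ into the factor $(1/\alpha)^{1/2-1/(2p)}$ yields exactly $(1/\alpha)^{1/2}$, which is the target exponent; but since Theorem~\ref{thm:main} is stated for finite $p$, we must approximate $\ell_\infty$ by $\ell_p$ for $p$ large enough that the two norms agree up to a constant factor.

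Concretely, first I would run the algorithm of Theorem~\ref{thm:main} with $p := \lceil \log n \rceil$. Since $1/\alpha \geq 1$ and $1/(2p) > 0$, we have $(1/\alpha)^{1/2-1/(2p)} \leq (1/\alpha)^{1/2}$, so the returned clustering $\calP$ satisfies
\[
\|\disagree(\calP)\|_p \leq O\bigl(\sqrt{1/\alpha}\,\log(1/\alpha)\bigr) \cdot \|\disagree(\calP^*_p)\|_p,
\]
where $\calP^*_q$ denotes an optimum clustering for the $\ell_q$ objective. The algorithm continues to run in polynomial time at this choice of $p$, since $\log n$ is polylogarithmic in the input size.

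Second, I would convert this $\ell_p$ guarantee into an $\ell_\infty$ guarantee by composing four elementary inequalities against $\calP^*_\infty$: (i) $\|\disagree(\calP)\|_\infty \leq \|\disagree(\calP)\|_p$ by monotonicity of $\ell_q$ norms in $q$; (ii) the $\ell_p$ approximation bound above; (iii) $\|\disagree(\calP^*_p)\|_p \leq \|\disagree(\calP^*_\infty)\|_p$ by optimality of $\calP^*_p$ for the $\ell_p$ objective; and (iv) $\|\disagree(\calP^*_\infty)\|_p \leq n^{1/p}\|\disagree(\calP^*_\infty)\|_\infty = e\cdot \|\disagree(\calP^*_\infty)\|_\infty$ at $p = \log n$. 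Chaining these yields $\|\disagree(\calP)\|_\infty \leq O(\sqrt{1/\alpha}\,\log(1/\alpha)) \cdot \|\disagree(\calP^*_\infty)\|_\infty$, with the extra factor of $e$ absorbed into the $O(\cdot)$.

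Since every step is either a direct invocation of Theorem~\ref{thm:main} or a textbook norm inequality, there is no substantial obstacle. The only mild points to verify are that the exponent simplification $(1/\alpha)^{1/2-1/(2p)} \leq (1/\alpha)^{1/2}$ is valid (it is, because $1/\alpha \geq 1$) and that the algorithm of Theorem~\ref{thm:main} remains polynomial-time when instantiated with $p = \lceil \log n \rceil$, which is standard for approximation algorithms of this flavor.
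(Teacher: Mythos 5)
Your proof is correct, but it takes a more circuitous route than necessary, and your premise that Theorem~\ref{thm:main} ``is stated for finite $p$'' is actually not how the paper proceeds. The paper's proof of Theorem~\ref{thm:main} (Section~\ref{subsec:prfMain}) explicitly handles $p = \infty$ as a first, separate case: it invokes Theorem~\ref{thm:local_guarantee}, part~(a), which asserts the \emph{deterministic} per-vertex bound $\disagree_u(\calP, E^+, E^-) \lesssim A_\infty \cdot y_u$ for every $u \in V$, where $A_\infty = (1/\alpha)^{1/2}\ln(1/\alpha)$. Taking the maximum over $u$ immediately gives $\|\disagree(\calP)\|_\infty \lesssim A_\infty \|y\|_\infty$, and the corollary follows at once by instantiating Theorem~\ref{thm:main} at $p = \infty$ with the convention $1/(2\infty) = 0$. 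Your $\ell_p$-to-$\ell_\infty$ reduction at $p = \Theta(\log n)$ does work — the four-step chain of inequalities is sound and the constant $n^{1/p} = O(1)$ is absorbed — but it only inherits the \emph{in-expectation} guarantee proved for finite $p$, whereas the paper's direct argument via property~(a) of Theorem~\ref{thm:local_guarantee} (which is a worst-case, always-true bound, enabled by property~(3) of the metric decomposition) yields a deterministic $\ell_\infty$ approximation guarantee. So your approach is valid but both longer and strictly weaker in the guarantee it delivers; the paper's route is a one-line instantiation.
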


We emphasize that our approximation ratio for the $\ell_p$ norm is independent of the graph size and only depends on $\alpha$. 

Our algorithm relies on the natural convex programming relaxation for this problem (Section~\ref{sec:relaxation}). We compliment our positive result (Theorem~\ref{thm:main}) by showing that it is likely to be the best possible based on the natural convex program, by providing an almost matching integrality gap. 
\begin{theorem}\label{thm:int_gap}
The natural convex programming relaxation for the $\ell_p$ objective in the Correlation Clustering with Asymmetric Classification Errors model has an integrality gap of $\Omega\big(\big(\nicefrac{1}{\alpha}\big)^{\frac{1}{2}-\frac{1}{2p}}\big)$.
\end{theorem}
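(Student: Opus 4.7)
The plan is to exhibit a single family of instances $\{\calI_\alpha\}_{\alpha\in(0,1]}$ that realizes the claimed gap, then verify it by a fractional upper bound and an integer lower bound.

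\textbf{Step 1 (Instance).} Build an instance on $|V|=n=\Theta(\nicefrac{1}{\alpha})$ vertices with a positive/negative labeling and weights (positive in $[\alpha\mathbf{w},\mathbf{w}]$, negative in $[\alpha\mathbf{w},\infty)$) chosen so that the instance is vertex-symmetric and ``internally frustrated'': positive edges pull vertices together while negative edges of carefully tuned weight push them apart, and no single cluster size can simultaneously avoid both types of disagreement. A natural candidate is a blow-up or bipartite-style construction that is invariant under a transitive vertex-permutation action, so that the convex program has a symmetric optimizer.

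\textbf{Step 2 (Fractional upper bound).} Exploiting the symmetry of the instance, I would parameterize a candidate $x^*$ by one value per edge orbit, verify the triangle inequalities $x^*_{uv}\leq x^*_{uw}+x^*_{wv}$ directly, and plug into
\[
\disagree_u(x^*) \;=\; \sum_{v:\,uv\in E^+} w_{uv}\,x^*_{uv} \;+\; \sum_{v:\,uv\in E^-} w_{uv}\,(1-x^*_{uv}).
\]
Optimizing the few free parameters yields a value $L$ such that $\disagree_u(x^*)\leq L$ for every $u$, hence $\|\disagree(x^*,E^+,E^-)\|_p\leq n^{1/p}L$.

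\textbf{Step 3 (Integer lower bound).} The heart of the argument is to show that in every clustering $\calP$ of $V$, at least $N=\Theta(\sqrt{\nicefrac{1}{\alpha}})$ vertices have disagreement at least $D=\Omega(\sqrt{\nicefrac{1}{\alpha}}\,L)$. Combining these,
\[
\|\disagree(\calP,E^+,E^-)\|_p \;\geq\; N^{1/p} D \;=\; \Omega\bigl((\nicefrac{1}{\alpha})^{\nicefrac{1}{2}+\nicefrac{1}{2p}}\,L\bigr) \;=\; \Omega\bigl((\nicefrac{1}{\alpha})^{\nicefrac{1}{2}-\nicefrac{1}{2p}} \cdot n^{1/p}L\bigr),
\]
and dividing by the Step 2 bound gives the claimed gap. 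To produce the $N$ bad vertices I would split the clusters of $\calP$ into ``large'' (size $\geq\sqrt{n}$) and ``small'' (size $<\sqrt{n}$): vertices in large clusters accumulate many negative-edge disagreements inside their cluster, while vertices in small clusters accumulate many positive-edge disagreements to the outside. A pigeonhole on the cluster-size distribution then forces at least one regime to contribute the required $N$ vertices with disagreement $\geq D$.

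The hard part is Step 3. The instance must be calibrated so that the two regimes above are simultaneously unavoidable — if the adversary can balance disagreements evenly across all $n$ vertices, the $\sqrt{\nicefrac{1}{\alpha}}$ factor in $N$ collapses and only a trivial bound survives. Threading the needle between admitting a tight symmetric $x^*$ in Step 2 and forcing a structural bottleneck for every clustering in Step 3 is the main combinatorial challenge, and is where the specific choice of positive/negative weights (and the value $n=\Theta(\nicefrac{1}{\alpha})$) is pinned down.
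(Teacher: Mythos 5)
Your plan differs substantially from the paper's, and the places where it differs are exactly where you leave the argument unfinished.

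The paper's construction is a frustrated cycle on only $n = 1 + \lceil\sqrt{1/\alpha}\rceil$ vertices, not $\Theta(1/\alpha)$: a Hamiltonian path of weight-$1$ positive edges from $s$ to $t$, the chord $(s,t)$ as a weight-$1$ negative edge, and all remaining pairs as weight-$\alpha$ positive edges. The integer lower bound is then trivial --- in any clustering some edge on the path-plus-chord cycle of weight $1$ must be violated, so some vertex has disagreement $\geq 1$ and $\|\disagree(\calP)\|_p \geq 1$ for every $p$. No structural ``many vertices must be bad'' claim is needed. The fractional upper bound places $x_{uv} = \mathrm{dist}_P(u,v)/(n-1)$, so each vertex pays $O(1/n)$ on its two path edges, $0$ on the $(s,t)$ chord, and $O(\alpha n)$ on the remaining $\alpha$-weight edges, giving $y_u \lesssim \sqrt{\alpha}$ per vertex and $\|y\|_p \lesssim n^{1/p}\sqrt{\alpha} \lesssim \alpha^{1/2 - 1/(2p)}$. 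The gap is the ratio of these two.

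Your proposal, by contrast, posits an unspecified vertex-transitive instance on $\Theta(1/\alpha)$ vertices and requires, in Step~3, that in \emph{every} clustering at least $\Theta(\sqrt{1/\alpha})$ vertices simultaneously incur disagreement $\Omega(\sqrt{1/\alpha}\,L)$. That is a far stronger structural claim than the paper needs (the paper only ever charges a single vertex in the integral solution), and you correctly flag it as the main combinatorial difficulty --- but then leave it open. Without a concrete instance and a proof of Step~3, the argument does not close; in particular, nothing in your sketch rules out a clustering that spreads disagreement evenly across all $n$ vertices, which you yourself note would collapse the $\sqrt{1/\alpha}$ factor in $N$. The arithmetic in your exponent calculation is internally consistent, so the \emph{framework} is not wrong per se, but you have effectively conjectured the hard part rather than proved it. The missing idea is that one does not need a strong integral lower bound at all: a single frustrated cycle forces disagreement $\geq 1$ at some vertex, and the asymmetry of the $\ell_p$ norm between a fractional cost spread thinly over $\sqrt{1/\alpha}$ vertices and an integral cost concentrated on one vertex already yields the full $(1/\alpha)^{1/2 - 1/(2p)}$ gap.
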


\noindent \textbf{Organization of the paper.} In Section~\ref{sec:relaxation}, we describe the convex relaxation that we will use in our algorithm for Correlation Clustering. In Section~\ref{sec:overview}, we introduce a novel technique for partitioning metric spaces. This forms the main technical basis for our algorithm for Correlation Clustering. In Section~\ref{sec:algorithm}, we prove our main result, Theorem~\ref{thm:main}. In Section~\ref{sec:proof_overview}, we describe our metric space partitioning scheme and give a proof overview of its correctness. In sections~\ref{sec:itr-clustering},~\ref{sec:Cluster-select},~\ref{sec:light-ball} and~\ref{sec:proof-lem-Phi-S}, we formally prove the correctness of our partitioning scheme, Theorem~\ref{thm: decompositon}. In Appendix~\ref{Gap_section}, we prove our integrality gap result, Theorem~\ref{thm:int_gap}.
 
\section{Convex Relaxation}\label{sec:relaxation}

Our algorithm for minimizing local objectives is based on rounding the optimal solution to a suitable convex program (Figure~\ref{fig:CP}). This convex program is similar to the relaxations used in \citet{CharikarGS17} and \citet{KalhanMZ19}.
In this convex program,  we have a variable $x_{uv}$ for every pair of vertices $u,v\in V$. The variable $x_{uv}$ captures the distance between $u$ and $v$ in the ``multicut metric''. 
In the integral solution, $x_{uv} = 0$ if $u$ and $v$ are in the same partition and $x_{uv} = 1$ if $u$ and $v$ are in different partitions. In order to enforce that the partitioning is consistent, we add triangle inequality constraints between all triplets of vertices (P2). We also require that distance $x_{uv}$ is symmetric (P3).


For every vertex $u \in V$, we use the variable $y_u$ to denote the total weight of violated edges incident on $u$ (P1). The objective of the convex program is thus to minimize $\Vert y \Vert_p$ -- the $\ell_p$ norm of the vector $y$. Notice that each constraint in the convex program is linear, and the objective function $\|\cdot\|_p:\bbR^n \to \bbR$ is convex (by the Minkowski inequality). 

\begin{figure*}
  \centering
\begin{equation*}
\begin{array}{ll@{}llr}
\text{minimize}  & \displaystyle \|y\|_p & &\text{(P)}\\
\text{subject to}
& y_u=\displaystyle\sum_{v:(u,v)\in E^+} w_{uv} x_{uv} + \sum_{v:(u,v)\in E^-} w_{uv} (1 - x_{uv}) \quad &\text{for all } u \in V & \text{(P1)}\\
&x_{v_1v_2}+x_{v_2v_3} \geq x_{v_1v_3} \quad &\text{for all } v_1,v_2,v_3 \in V&\text{(P2)}\\
&x_{uv} = x_{vu}\quad &\text{for all } u,v \in V& \text{(P3)}\\
& x_{uv} \in [0,1] &\text{for all } u,v \in V & \text{(P4)}\\
\end{array}
\end{equation*}
  \caption{Convex relaxation for Correlation Clustering with min $\ell_p$ objective for $p\geq 1$ or $p = \infty$.}\label{fig:CP}
\end{figure*}

What remains to be shown is that the relaxation presented in Figure~\ref{fig:CP} is valid. To this end, consider any partition $\mathcal{P} = (P_1, P_2, \ldots, P_k)$ of the set of vertices $V$. For every pair of vertices $u,v$, if $u$ and $v$ lie in the same partition, we assign the corresponding variable $x_{uv}$ a value of $0$, else we assign it a value of $1$. Note that such an assignment satisfies the triangle inequality (P2). Variable $y_u$ thus captures the total weight of violated edges incident on $u$; every similar edge $(u,v)$ incident on $u$ that crosses a partition contributes $w_{uv} \cdot x_{uv} = w_{uv}$ to $y_u$, and every dissimilar edge present within a cluster contributes $w_{uv} \cdot (1 - x_{uv}) = w_{uv}$ to $y_u$. Thus, $y_u$ is equal to $\disagree_u(\mathcal{P}, E^+,E^-)$. 
Hence, an integral convex program solution defined in such a manner is feasible and has the same cost as the partitioning. It is possible, however, that the cost of the optimal fractional solution is less than the cost of the optimal integral solution, and hence the convex program in Figure~\ref{fig:CP} is a relaxation to our problem. We note that our relaxation is simpler than the relaxation used in~\citet{KalhanMZ19}. The additional variables in their convex program are not needed in our case because all edge weights belong to the interval $[\alpha\mathbf{w},\mathbf{w}]$.
\section{A New Technique for Partitioning Metric Spaces}\label{sec:overview}

We will use the following notation: Given expressions $X$ and $Y$, we write $X \lesssim Y$ if $X \leq C \cdot Y$ for some constant $ C > 0$ (that is, $X = O(Y)$). 
We define $\gtrsim$ similarly. Furthermore, let $X^+=0$ if $X < 0$ and $X^+=X$ if $X \geq 0$.
We use $\ball(v, l) = \{u : d(u,v) \leq l\}$ to denote the set of vertices at a distance of at most $l$ from $v$. 

In this section, we describe our main technical tool -- a novel probabilistic scheme for partitioning metric spaces which may be of independent interest. This partitioning scheme forms the basis of our algorithm (Algorithm~\ref{alg:CC}) for Correlation Clustering. We begin by stating this technical result.


\begin{theorem}\label{thm: decompositon}
For every $q \geq 1$ there exists a $\beta^*_q=\Theta\big(\frac{1}{q\ln(q+1)}\big) < 1$ such that the following holds. Consider a finite metric space $(X,d)$. 
Fix two positive numbers $r$ and $R$ such that $\beta = r/R\leq \beta^*_q$. 
Let $D_{\beta}=2(q+1)\ln\nicefrac{1}{\beta}$. 
Then, there exists a probabilistic partitioning $\calP$ satisfying properties (1), (2), and (3):

\begin{enumerate}
    \item[(1)] $\diam(P)\leq 2R\mbox{ }$ for every $P\in \calP$ (always); 
    
    \item[(2)]  For every point $u$ in $X$, the following bound holds:
    $$\sum_{v\in \ball(u,R)}\bigg(\Pr\big\{\calP(u)\neq \calP(v)\big\} - 
    D_{\beta}\frac{d(u,v)}{R}\bigg)^+
    \lesssim
    \beta^q
    \sum_{v\in \ball(u,2R)} \frac{d(u,v)}{R},$$
    where $\calP(u)$ denotes the partition of $\calP$ that contains $u$.
    
    \item[(3)] Moreover, for every $u$ in $X$, we always have, 
    $$
    \sum_{v\in \ball(u,r)}\one\big\{\calP(u)\neq \calP(v)\big\}\lesssim
    \beta \cdot D^2_{\beta}
    \sum_{v\in\ball(u,2R)}\frac{d(u,v)}{R}.
    $$
\end{enumerate}
\end{theorem}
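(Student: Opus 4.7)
The plan is to construct $\calP$ via an iterative \emph{ball-carving} scheme. At each round we pick a center $c$ from the set $S$ of still-unassigned points and a random radius $\rho$ drawn from a carefully tuned distribution supported on $[r,R]$; we form the cluster $\ball(c,\rho)\cap S$, remove it from $S$, and recurse until $S=\emptyset$. The natural choice is a density on $\rho$ that grows roughly like $e^{\lambda(\rho-R)}$ with rate $\lambda\approx (q+1)\ln(1/\beta)/R$, so that mass concentrates near $\rho=R$: the rate controls the $D_\beta$ Lipschitz slope appearing in~(2), while the smallness of the density near $\rho=r$ (by a factor roughly $\beta^{q+1}$ compared to the density near $R$) drives the $\beta^q$ and $\beta$ improvements in~(2) and~(3). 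The choice of the center $c$ within each round --- the main delicate point --- will be the point of $S$ that (approximately) minimizes a certain local potential, and is deferred to Section~\ref{sec:Cluster-select}.

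Property~(1) is immediate from $\diam(\ball(c,\rho))\le 2\rho\le 2R$. To handle~(2) and~(3) I would introduce the potential
\[
\Phi_S(u)\;=\;\sum_{v\in \ball(u,2R)\cap S}\frac{d(u,v)}{R},
\]
and establish a round-wise bound of the form: the expected new separations at round $t$ are dominated pointwise by $D_\beta\,d(u,v)/R$ plus a tail contribution whose sum over $v$ is $\lesssim \beta^q\,\Phi_{S_t}(u)$; this is the content of the light-ball lemma of Section~\ref{sec:light-ball}. The standard CKR-style analysis handles the ``typical'' case in which $\rho$ lies in the interval between $d(c,u)$ and $d(c,v)$ of length $\le d(u,v)$, producing the $D_\beta\,d(u,v)/R$ term. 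The residual corresponds to the tail of the exponential density and yields the $\beta^q$ factor. For property~(3), observe that if $v\in\ball(u,r)$ is separated from $u$ then the realized radius must lie in an interval of length at most $2d(u,v)\le 2r$ straddling $d(c,u)$; the density on this interval compared to the average density is smaller by a factor $\approx \beta\cdot D_\beta$, which tightens the standard bound by the extra factor $\beta$ and produces the $\beta\,D_\beta^2$ prefactor of~(3).

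The main technical obstacle is that an iterative scheme, unlike a single-shot partition, separates a pair $(u,v)$ only in one specific round, so we must control the cumulative loss across rounds rather than per round. In particular, the right-hand sides of~(2) and~(3) involve sums over $\ball(u,2R)$ in the full metric, not in the residual $S_t$; points near $u$ may be carved away in earlier rounds and can no longer be charged in later rounds. The plan is to show that whenever $u$ survives to round $t$, the remaining potential $\Phi_{S_t}(u)$ is still large enough to absorb the cuts produced at that round --- more precisely, the expected shrinkage of $\Phi_{S_t}(u)$ dominates the expected new cut cost charged to $u$, so a telescoping sum gives the claim. This monotonicity argument is the heart of Sections~\ref{sec:itr-clustering}--\ref{sec:proof-lem-Phi-S} and relies crucially on choosing cluster centers to keep $\Phi_{S_t}$ well-concentrated near $u$ as long as $u$ itself remains uncarved; formulating and verifying a center-selection rule with this property is likely to be the hardest single step.
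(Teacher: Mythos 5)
Your high-level plan --- iterative ball-carving with an exponentially-tilted radius distribution at rate $\approx (q+1)\ln(1/\beta)/R$ --- does match the spirit of the paper's construction, and your remark that the tail of the exponential near $\rho=R$ yields the $\beta^{q+1}$ factor is on the right track. However, there are two substantive gaps.

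First and most serious: property (3) is a \emph{worst-case} (``always'') guarantee, but the argument you sketch for it is probabilistic. You observe that the exponential density assigns small mass to the short interval around $d(c,u)$; this bounds $\E\big[\sum_{v\in\ball(u,r)}\delta_P(u,v)\big]$ but says nothing about the realization. A freely drawn radius can land in a shell that is very densely populated, separating many points of $\ball(u,r)$ at once. The paper handles this by a two-step device you do not have: it first computes (deterministically, in polynomial time) the set $S\subset(3R_0,R_1]$ of all radii for which property (c) of the single-cluster lemma holds for \emph{every} $u$, shows $\mu(S)\ge R/2$ when the ball $\ball(z,R_1)$ is ``light'', and then samples $t$ only from $S$ via the measure-preserving map $\pi_S^{\mathrm{inv}}$. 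When the ball is ``heavy'' it instead takes the fixed radius $R_1$ and verifies (c) directly. Without this restriction of the support (or some substitute), your scheme cannot deliver the always-true bound. This Heavy/Light dichotomy --- together with the pivot rule $z=\arg\max_u|\ball(u,R_0)|$, which is what makes the lower bound $|P|\lesssim D_\beta\sum_{v\in\ball(u,2R)}d(u,v)/R$ available --- is missing from your proposal; your ``minimize a local potential'' center rule is not the one the paper uses, and it is not apparent that it would give the same leverage.

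Second, the telescoping step. You frame the cross-round accounting as a potential-shrinkage argument: the expected drop in $\Phi_{S_t}(u)$ should dominate the expected new cut cost. The paper instead uses an exact partition-of-unity identity: the per-round right-hand side carries the indicator $\vee_{P_i}(u,v)\one\{u,v\in X_i\}$, and for each pair $(u,v)$ exactly one iteration has $\vee_{P_i}(u,v)\one\{u,v\in X_i\}=1$, while exactly one iteration (possibly the same one, possibly none) has $\delta_{P_i}(u,v)\one\{u,v\in X_i\}=1$. Summing over $i$ then gives exactly $\sum_{v\in\ball(u,2R)}d(u,v)/R$ on the right, with no comparison of expectations needed. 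The shrinkage story you propose would have to be proved, whereas the $\vee_P$-weighted single-cluster bound together with the identity $\sum_i \vee_{P_i}(u,v)\one\{u,v\in X_i\}=1$ makes the reduction essentially algebraic; this is cleaner and also the only version the paper actually verifies.
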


The partitioning we construct in Theorem~\ref{thm: decompositon} resembles a $2D$–separating $2R$-bounded stochastic
decomposition of a metric space~\cite{bartal1996,CKR,FRT}. Recall that a $2D$–separating $2R$-bounded stochastic
decomposition satisfies property (1) of Theorem~\ref{thm: decompositon} and the $2D$-separating condition: for every $u,v\in X$,
\begin{equation}\label{eq:separating-condition}
\Pr\big\{\calP(u)\neq \calP(v)\big\} - D\frac{d(u,v)}{R} \leq 0.
\end{equation}

At a very high level, the goals of our partitioning and the $2D$–separating $2R$-bounded stochastic
decomposition are similar: decompose a metric space in clusters of diameter at most $2R$ so that nearby points lie in the same cluster with high enough probability.  However, the specific conditions are quite different. Loosely speaking, property (2) of Theorem~\ref{thm: decompositon} says that the decomposition satisfies (\ref{eq:separating-condition}) with $D = D_\beta$  on average up to an additive error term of $O(\beta^q)\sum_{v\in \ball(u,2R)} \frac{d(u,v)}{R}$. Crucially, property (3) provides an analogous guarantee not only in expectation, but also in the worst case (which a $2D$-separating decomposition does not satisfy). 

Property (3) plays a key role in proving our main result, Theorem~\ref{thm:main}. For the standard objective function for Correlation Clustering (minimizing the $\ell_1$ norm of the disagreements vector), properties (1) and (2) are sufficient since an upper bound on the expected weight of disagreements on a single vertex implies an upper bound on the expected weight of the total disagreements. The situation gets trickier when we consider minimizing arbitrary $\ell_p$ ($p >1$) norms of the disagreements vector. For instance, having an upper bound on the expected weight of disagreements on a single vertex does not necessarily translate to an upper bound on the expected weight of disagreements on a worst vertex ($\ell_\infty$ norm). We overcome this nonlinear nature of the problem for higher values of $p$ by using the deterministic (worst-case) guarantee given by property (3) of Theorem~\ref{thm: decompositon}.

Also note that coefficients $D_\beta$ and $\beta$ do not depend on the size $|X|$ of the metric space (in our algorithm, they will only depend on $\alpha$, which is defined as the ratio of the smallest edge weight to the largest positive edge weight). However, the optimal value of $D$ in the $2D$-separating condition is $\Theta(\log |X|)$. 

\section{Correlation Clustering via Metric Partitioning}\label{sec:algorithm}
In this section, we will prove our main theorem, Theorem~\ref{thm:main}. Our algorithm (Algorithm~\ref{alg:CC}) for minimizing local objectives for Correlation Clustering with Asymmetric Classification Errors begins by solving the convex relaxation in Figure~\ref{fig:CP} to obtain a solution $\{x_{uv}\}_{u,v \in V}$. It then defines a metric $d(\cdot,\cdot)$ on $V$ by setting distances $d(u, v) = x_{uv}$.

We let $q = 2$. Let $\alpha^*$ be the solution of equation $3\sqrt{\alpha^*}/\ln\nicefrac{1}{\alpha^*} = \beta^*_2$ (note that $\alpha^*$ is an absolute constant). 
We assume that $\alpha \leq \alpha^*$. If $\alpha > \alpha^*$, 
we just redefine $\alpha$ as $\alpha^*$ (this will increase the approximation ratio only by a constant factor).
 We set $r={\sqrt{\alpha}}/{\ln \nicefrac{1}{\alpha}}$ and $R=\nicefrac{1}{3}$. 
 Note that $r / R \leq \beta^*_2 < 1$.

 At this point, the algorithm makes use of our key technical contribution -- a new probabilistic scheme for partitioning metric spaces (Algorithm~\ref{alg:MPSP}) -- and outputs the partitioning thus obtained.
 Please refer to Algorithm~\ref{alg:CC} for a summary.
 
 To show that $\calP$ has the desired approximation ratio in Theorem~\ref{thm:main}, we bound the weight of disagreements at every vertex $u \in V$ with respect to $\calP$.
 To this end, we show that two useful quantities, the total weight of disagreements at $u$ and the expected weight of disagreements at $u$ can be bounded in terms of $y_u$, the cost paid by the convex program for vertex $u$. In Theorem~\ref{thm:local_guarantee}, we make use of the properties of $\calP$ given by Theorem~\ref{thm: decompositon}
 to get a bound on these two quantities for each vertex $u \in V$. Then, in Section~\ref{subsec:prfMain}, we use the bounds from Theorem~\ref{thm:local_guarantee} to complete the proof of  Theorem~\ref{thm:main}: we show that if the total cost of disagreements and the expected cost of disagreements with respect to $\calP$ are bounded for every $u \in V$, then the 
 partitioning $\calP$ achieves the desired approximation ratio in Theorem~\ref{thm:main}. We remind the reader that given a partitioning $\calP$ of the vertex set and a vertex $u \in V$,  $\disagree_u(\calP, E^+, E^-)$ denotes the weight of edges incident on $u$ that are in disagreement with respect to $\calP$. 
Moreover, $y_u$ denotes the convex programming (CP) cost of the vertex $u$. 

Define $A_1 = \ln \nicefrac{1}{\alpha}$ and $\ainf = \nicefrac{\ln (\frac{1}{\alpha})}{\sqrt{\alpha}} = \nicefrac{1}{r}$. Our analysis focuses on bounding two key quantities related to a vertex $u \in V$. The first quantity, $\disagree_u(\calP, E^+, E^-)$, is the total weight of edges incident on $u$ that are in disagreement with $\calP$. We show that this quantity can be charged to the CP cost of $u$ and is at most $\ainf \cdot y_u$. We then get a stronger bound for our second quantity of interest, $\E[\disagree_u(\calP, E^+, E^-)]$, the expected cost of a vertex $u$. In particular, we show that $\E[\disagree_u(\calP, E^+, E^-)] \leq A_1 \cdot y_u$.

\begin{algorithm}[tb]
   \caption{Correlation Clustering Algorithm}
   \label{alg:CC}
\begin{algorithmic}
   \STATE {\bfseries Input:} $G = (V, E^+, E^-,\mathbf{w}, \alpha)$, $\{x_{uv}\}_{u,v \in V}$.
   \STATE Define a metric $d$ on $V$ such that $d(u,v) = x_{uv}$ for all $u,v\in V$.
   \STATE Define $r = (\nicefrac{\sqrt{\alpha}}{\ln \nicefrac{1}{\alpha}}), R = \nicefrac{1}{3}$, $q=2$.
   \STATE $\calP =$ Metric Space Partitioning Scheme$(V,d,r,R, q)$.
   \STATE Output $\calP$.
\end{algorithmic}
\end{algorithm}                    
\begin{theorem}\label{thm:local_guarantee}
Given an instance of Correlation Clustering with Asymmetric Classification Errors (Definition~\ref{def:alphaC}), Algorithm~\ref{alg:CC} outputs a partitioning $\calP$ of the vertex set such that the following holds for every vertex $u \in V$:
\begin{itemize}
	\item[(a)] $\disagree_u(\calP, E^+, E^-) \lesssim A_\infty \cdot y_u$;
	\item[(b)] $\E[\disagree_u(\calP, E^+, E^-)] \lesssim A_1 \cdot y_u$,
\end{itemize}
where $A_1 = \ln (\nicefrac{1}{\alpha})$ and $\ainf = \nicefrac{\ln (\frac{1}{\alpha})}{\sqrt{\alpha}}$.
\end{theorem}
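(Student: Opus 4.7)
The plan is to bound $\disagree_u(\calP, E^+, E^-)$ vertex-by-vertex by splitting the edges incident on $u$ according to sign and a distance threshold, and then charging each piece to $y_u$ using one of the three properties of Theorem~\ref{thm: decompositon}. The key auxiliary bound I will establish first is the distance-sum lemma
\[
\mathbf{w}\sum_{v\in\ball(u,2R)}\frac{d(u,v)}{R}\;\lesssim\;\frac{y_u}{\alpha}.
\]
This is where the asymmetric-error assumption enters crucially: every positive edge $(u,v)$ contributes $w_{uv}x_{uv}\geq\alpha\mathbf{w}x_{uv}=(\alpha/3)\mathbf{w}\cdot d(u,v)/R$ to $y_u$, and every negative edge with $v\in\ball(u,2R)$ satisfies $1-x_{uv}\geq 1/3$ and hence contributes $w_{uv}(1-x_{uv})\geq\alpha\mathbf{w}/3$ to $y_u$, while the summand $\mathbf{w}\cdot d(u,v)/R$ is at most $2\mathbf{w}$.

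A second, purely deterministic observation handles the negative disagreements in both parts of the theorem: if $(u,v)\in E^-$ and $\calP(u)=\calP(v)$, property~(1) forces $d(u,v)\leq 2R=2/3$, so $w_{uv}\leq 3w_{uv}(1-x_{uv})$; summing yields $\sum_{v:(u,v)\in E^-} w_{uv}\one\{\calP(u)=\calP(v)\}\leq 3y_u$, which is absorbed into both $A_1 y_u$ and $A_\infty y_u$.

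For part~(a), I partition the positive disagreeing edges by whether $v\in\ball(u,r)$. Far edges ($v\notin\ball(u,r)$) satisfy $x_{uv}>r$, so $w_{uv}\leq w_{uv}x_{uv}/r$, and their total contribution is at most $y_u/r=A_\infty y_u$. Close edges use $w_{uv}\leq\mathbf{w}$ together with property~(3) to give
\[
\mathbf{w}\sum_{v\in\ball(u,r)}\one\{\calP(u)\neq\calP(v)\}\;\lesssim\;\mathbf{w}\cdot\beta D_\beta^2\sum_{v\in\ball(u,2R)}\frac{d(u,v)}{R}\;\lesssim\;\frac{\beta D_\beta^2}{\alpha}\,y_u,
\]
and plugging in $\beta=3r=\Theta(\sqrt{\alpha}/\ln(1/\alpha))$ and $D_\beta=O(\ln(1/\alpha))$ yields $\beta D_\beta^2/\alpha=O(\ln(1/\alpha)/\sqrt{\alpha})=O(A_\infty)$, as needed.

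For part~(b), I split positive edges instead at radius $R$. Far positive edges ($v\notin\ball(u,R)$) have $x_{uv}\geq R=1/3$, so even under the trivial bound that every such edge is in disagreement, their total weight is at most $3y_u$. For $v\in\ball(u,R)$, property~(2) lets me write $\Pr\{\calP(u)\neq\calP(v)\}\leq D_\beta\,d(u,v)/R+\bigl(\Pr\{\calP(u)\neq\calP(v)\}-D_\beta\,d(u,v)/R\bigr)^+$; summing $w_{uv}$ against the linear term contributes $(D_\beta/R)\sum w_{uv}x_{uv}\leq 3D_\beta y_u=O(A_1)y_u$, while pulling $w_{uv}\leq\mathbf{w}$ out of the $(\cdot)^+$ term and then applying property~(2) contributes $\lesssim(\beta^q/\alpha)y_u$, which for $q=2$ equals $O(y_u/\ln^2(1/\alpha))$. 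The main technical step—and what forces the choices $q=2$ and $r=\sqrt{\alpha}/\ln(1/\alpha)$—is the distance-sum lemma, together with verifying that the exponents in $\beta D_\beta^2/\alpha$ and $\beta^q/\alpha$ simultaneously match the targets $A_\infty$ and $A_1$.
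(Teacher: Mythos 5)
Your proposal is correct and follows essentially the same route as the paper: decompose $\disagree_u$ into negative edges, long positive edges (threshold $r$ for part (a), $R$ for part (b)), and short positive edges, charge the first two pieces directly to $y_u$ via the CP constraints and the diameter bound, and charge the short positive edges via properties (3) and (2) of Theorem~\ref{thm: decompositon}, with the right-hand sides of those properties converted into $y_u/\alpha$ by the same edge-weight lower bound. The only organizational difference is that you factor out the distance-sum estimate $\mathbf{w}\sum_{v\in\ball(u,2R)} d(u,v)/R \lesssim y_u/\alpha$ as an explicit lemma covering both positive and negative edges, whereas the paper does this step inline; your version is arguably a touch cleaner, and it also sidesteps a slightly misstated intermediate inequality in the paper's treatment of $E^{\geq r}$ (the paper writes $|\{v:(u,v)\in E^{\geq r}\}|\leq y_u/r$, which is off by a factor of $1/\alpha$, though the conclusion $\disagree_u(\calP,E^{\geq r},\varnothing)\leq y_u/r$ is recovered exactly as you argue, via $w_{uv}\leq w_{uv}x_{uv}/r$).
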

\begin{proof}
Without loss of generality we assume that the scaling parameter $\mathbf{w}$ is $1$. Thus, for every positive edge $e^+ \in E^+$, $w_{e^+} \in [\alpha, 1]$, while for every negative edge $e^- \in E^-, w_{e^-} \geq \alpha$.
Write the formula for $\disagree_u(\calP, E^+, E^-)$ for a given vertex $u \in V$, 

\begin{align*}
    \disagree_u&(\calP, E^+, E^-) = \displaystyle\sum_{(u,v)\in E^+} w_{uv} \cdot \one\{\calP(u) \neq \calP(v)\} 
    + \displaystyle\sum_{(u,v)\in E^-} w_{uv} \cdot \one\{\calP(u) = \calP(v)\}.
\end{align*}

Let $E^{\geq r}$ be the set of positive edges $(v,w)$ in $E^+$ with $x_{vw} \geq r$. Observe that

\begin{align}\label{eqn:disagreement_parsed}
 \disagree_u(\calP, E^+, E^-) = &\phantom{+}\disagree_u(\calP, \varnothing, E^-)
  +\disagree_u(\calP, E^{\geq r}, \varnothing)
  + \disagree_u(\calP, E^+\setminus E^{\geq r}, \varnothing).
\end{align}
Recall that 
$\beta=r/R = \nicefrac{3\sqrt{\alpha}}{\ln \nicefrac{1}{\alpha}}$, $q=2$, and $D_\beta = \Theta\left(\ln\nicefrac{1}{\beta}\right) = \Theta\left(\ln\nicefrac{1}{\alpha}\right)$.
From Theorem~\ref{thm: decompositon}, part (a), 
we know that the diameter of each partition $P$ in $\calP$ is at most $2R$. For any negative edge to be in disagreement, 
both its endpoints must lie in the same partition. Thus, the length $x_{uv}$ for any such edge $(u,v) \in E^-$ is at most $2R$, and hence its CP contribution is at most $(1 - 2R) = \nicefrac{1}{3}$.
Hence, 
$$\disagree_u(\calP, \varnothing, E^-) = \sum\limits_{(u,v)\in E^-} w_{uv} \one\{\calP(u) = \calP(v)\} \leq 3 y_u.$$ 
Then,
$$\disagree_u(\calP, E^{\geq r}, \varnothing) \leq |\{v: (u,v)\in E^{\geq r}\}| \leq \frac{y_u}{r} = \ainf y_u.$$ 

To complete the proof of Theorem~\ref{thm:local_guarantee}, part (a) we write:
\begin{align*}
    \disagree_u(\calP, E^+\setminus E^{\geq r}, \varnothing) 
    &= \sum_{v \in \ball(u, r)} w_{uv} \cdot \one\big\{\calP(u) \neq \calP(v)\big\} \\
    &\leq \sum_{v \in \ball(u, r)} \one\big\{\calP(u) \neq \calP(v)\big\}.
\end{align*}
The inequality above holds because the weight of each positive edge is at most $1$. Next, using the bound for $\sum_{v \in \ball(u, r)} \one\big\{\calP(u) \neq \calP(v)\big\}$ from Theorem~\ref{thm: decompositon} part (c), we get,
\begin{align*}
    \sum_{v \in \ball(u, r)} \one\big\{\calP(u) \neq \calP(v)\big\} &\lesssim \beta \cdot D^2_{\beta} \sum_{v \in \ball(u,2R)}  \frac{d(u,v)}{R} \\
    &\lesssim \frac{\sqrt{\alpha}}{\ln (\nicefrac{1}{\alpha})} \cdot (\ln^2 (\nicefrac{1}{\alpha})) \sum_{v \in \ball(u,2R)} \frac{d(u,v)}{R} \\
    &\lesssim \frac{\sqrt{\alpha}}{\ln (\nicefrac{1}{\alpha})} \cdot \ln^2 (\nicefrac{1}{\alpha}) \cdot \frac{y_u}{\alpha} = A_{\infty} \cdot y_u,
\end{align*}
where the last inequality follows from the fact that each positive edge weight is at least $\alpha$. Thus, from~(\ref{eqn:disagreement_parsed}) it follows:
$$\disagree_u(\calP, E^+, E^-)\lesssim \ainf \cdot y_u.$$

We now prove Theorem \ref{thm:local_guarantee}, part (b). We separately consider short and long positive edges.
Let $E^{\leq R}$ be the set of positive edges $(v,w)\in E^+$ with $x_{vw}\leq R$. Note that 
\begin{align} \label{eq:y-bound}
y_u &\geq \sum_{v\in\ball(u,R)} w_{uv}\min(d(u,v), 1 - d(u,v))\\ &=\sum_{v\in\ball(u,R)} w_{uv} d(u,v) =\frac{1}{3}\notag
\sum_{v\in\ball(u,R)} w_{uv}\frac{d(u,v)}{R}.
\end{align}
Therefore, we have
\begin{align*}
\E[\disagree_u(\calP, E^{\leq R}, \varnothing) -  3D_\beta \cdot y_u] &\leq
 \E\Big[ \sum_{v \in \ball(u, R)} w_{uv} \cdot \one\{\calP(u)\neq \calP(v)\} - D_{\beta} \sum_{v \in \ball(u, R)} w_{uv} \frac{d(u,v)}{R} \Big]\\
&= \sum_{v \in \ball(u, R)} w_{uv} \Big(\Pr\{\calP(u) \neq \calP(v)\} - D_\beta \frac{d(u,v)}{R}\Big)\\
&\leq \sum_{v \in \ball(u, R)} w_{uv} \Big(\Pr\{\calP(u) \neq \calP(v)\} - D_\beta \frac{d(u,v)}{R}\Big)^+.
\end{align*}
Since all edges $(u,v)$ in $E^{\leq R}$ are positive, we have $w_{uv}\leq 1$. Consequently,
\begin{align*}
    \E[&\disagree_u(\calP, E^{\leq R}, \varnothing) -  3D_\beta \cdot y_u] 
\leq\sum_{\stackrel{v \in \ball(u, R)}{ \text{s.t. }(u,v) \in E^+}} \Big(\Pr\{\calP(u) \neq \calP(v)\} - D_\beta \frac{d(u,v)}{R}\Big)^+.
\end{align*}
We bound the right hand side using property (2) of Theorem~\ref{thm: decompositon}:
\begin{align*}
\sum_{v \in \ball(u, R)} \Big(\Pr\{\calP(u) \neq \calP(v)\}  - D_\beta \frac{d(u,v)}{R}\Big)^+
&\lesssim \beta^2 \sum_{v \in \ball(u, 2R)} \frac{d(u,v)}{R}\\
&\lesssim \frac{\alpha}{\ln^2 (\nicefrac{1}{\alpha})} \sum_{v \in \ball(u, 2R)} d(u,v) \\
&\leq \frac{1}{\ln^2 (\nicefrac{1}{\alpha})}
\sum_{v \in \ball(u, 2R)} 
w_{uv}\cdot 2\min(d(u,v), 1 - d(u,v))\\
&\leq   \frac{2}{\ln^2 (\nicefrac{1}{\alpha})} \cdot y_u.
\end{align*}
 Here, we used that $w_{uv} \geq \alpha$
 and $d(u,v) \leq 2(1 - d(u,v))$ for $v\in\ball(u, 2R)$. Thus,
$$
\E[\disagree_u(\calP, E^{\leq R}, \varnothing)] \lesssim \big(\ln (\nicefrac{1}{\alpha}) + \frac{1}{\ln^2 (\nicefrac{1}{\alpha})}\big) y_u \lesssim A_1 \cdot y_u.    
$$
Furthermore, $\disagree_u(\calP, E^+\setminus E^{\leq R}, \varnothing) \leq \frac{1}{R} \cdot y_u \leq A_1 \cdot y_u.$ Therefore, from~(\ref{eqn:disagreement_parsed}) it follows that 
$$
    \E[\disagree_u(\calP, E^+, E^-)]\lesssim A_1 y_u.
$$
%
\end{proof}
We now use Theorem \ref{thm:local_guarantee} to prove Theorem~\ref{thm:main}.

\subsection{Proof of Theorem~\ref{thm:main}}\label{subsec:prfMain}
In this section, we show that the partitioning $\calP$ output by Algorithm~\ref{alg:CC} achieves the desired approximation ratio -- thereby proving our main theorem, Theorem~\ref{thm:main}.
To show this, we will use the fact that $\calP$ satisfies the properties in Theorem~\ref{thm:local_guarantee}.

\begin{proof}[Proof of Theorem~\ref{thm:main}.] If $p=\infty$, then we get an $O(A_{\infty}) = O((\nicefrac{1}{\alpha})^{\nicefrac{1}{2}} \ln\nicefrac{1}{\alpha})$ approximation by Theorem~\ref{thm:local_guarantee}, item (a), as desired. So we assume that $p < \infty$ below.  Given the guarantees from Theorem \ref{thm:local_guarantee}, we observe, 
\begingroup
\allowdisplaybreaks
\begin{align*}
    \E\left[\sum\limits_{u\in V}\disagree_u(\calP, E^+, E^-)^p \right] 
    &=\sum\limits_{u\in V}\E[\disagree_u(\calP, E^+, E^-)^{p-1}  \cdot \disagree_u(\calP, E^+, E^-) ]\\
    &\lesssim \sum\limits_{u\in V}\E\left[(\ainf\cdot y_u)^{p-1}\cdot \disagree_u(\calP, E^+, E^-)\right]\\
    &=\sum\limits_{u\in V}(\ainf\cdot y_u)^{p-1}\E\left[\disagree_u(\calP, E^+, E^-)\right]\\
    &\lesssim \sum\limits_{u\in V}(\ainf\cdot y_u)^{p-1}\cdot A_1\cdot y_u=\sum\limits_{u\in V}A^p\cdot y_u^{p},
\end{align*}
\endgroup
where $A=\left(\ainf^{p-1}\cdot A_1\right)^{\frac{1}{p}}$. Note that the desired approximation factor is $O(A)$. From Jensen's inequality, it follows that
\begingroup
\allowdisplaybreaks
\begin{align*}
    \E\bigg[\Big(\sum\limits_{u\in V}\disagree_u(\calP,  E^+, E^-)^{p}\Big)^{\frac{1}{p}}\bigg]
    &\leq \left(\E\left[\sum\limits_{u\in V}\disagree_u(\calP, E^+, E^-)^p\right]\right)^{\frac{1}{p}}\\
    &\lesssim \left(\sum\limits_{u\in V}A^p\cdot y_u^p\right)^{\frac{1}{p}}= A\cdot \Vert y\Vert_p.
\end{align*}
\endgroup
This finishes the proof.
\end{proof}


\section{Overview of Metric Partitioning Scheme}\label{sec:proof_overview}
In this section we describe our partitioning scheme and give a proof overview of Theorem~\ref{thm: decompositon}. A pseudocode for this partitioning scheme is given in Algorithm~\ref{alg:MPSP}.
More specifically, in Section~\ref{sec:iterative_clust} we reduce the problem to choosing a random set of particular interest as stated in Theorem~\ref{thm:itr-clustering}. In Section~\ref{sec:single_clust} we describe an algorithm for choosing such a random set and give a proof overview of its correctness. The pseudocode for choosing a random set is given in Algorithm~\ref{alg:ball-select}.
\subsection{Iterative Clustering}\label{sec:iterative_clust}
Given a metric space $(X,d)$, our partitioning scheme uses an iterative algorithm -- Algorithm~\ref{alg:MPSP} to obtain $\calP$. Let $X_t$ denote the set of not-yet clustered vertices at the start of iteration $t$ of Algorithm~\ref{alg:MPSP}.
At step $t$, the algorithm finds and outputs random set $P_t \subseteq X_t$. It then updates the set of not-yet clustered vertices $(X_{t+1} = X_t \setminus P_t)$, and repeats this step until all vertices are clustered.
Algorithm~\ref{alg:MPSP} makes use of the following theorem in each iteration to find the random set $P_t$.

We need the following notation to state the theorem.
Let $\delta_P(u,v)$ be the cut metric induced by the set $P$: $\delta_P(u,v)=1$ if $u\in P$ and $v\notin P$ or 
$u\notin P$ and $v\in P$; $\delta_P(u,v)=0$ 
if $u\in P$ and $v\in P$ or $u\notin P$ and $v\notin P$.
Also, let $\vee_P(u,v)$ be the indicator of the event $u\in P$ or $v\in P$ or both $u$ and $v$ are in $P$. We denote $[k]= \{1, 2, \ldots k\}$.
\begin{theorem}\label{thm:itr-clustering}
For every $q \geq 1$ there exists a $\beta^*_q=\Theta\big(\frac{1}{q\ln(q+1)}\big) < 1$ such that the following holds. Consider a finite metric space $(X,d)$. 
Fix two positive numbers $r$ and $R$ such that $\beta = r/R\leq \beta^*_q$. 
Let $D_{\beta}=2(q+1)\ln\nicefrac{1}{\beta}$. Then, there exists an algorithm for finding a random set $P$ satisfying properties (a), (b), and (c):

\item[(a)] $\diam(P)\leq 2R$ (always);

\item[(b)] For every point $u$ in $X$, the following bound holds:
$$\sum_{\mathclap{\substack{\phantom{1}\\v\in \ball(u,R)}}}\Big(\Pr\big\{\delta_P(u,v) =1 \big\}  - D_{\beta}\frac{d(u,v)}{R}\,\Pr\{\vee_P(u,v)=1\}\Big)^+ \lesssim
\beta^q \cdot 
\E\Bigg[\sum_{v\in \ball(u,2R)} \frac{d(u,v)}{R}
\cdot \vee_P(u,v)\Bigg].
$$
\item[(c)] Moreover, for every $u$ in $X$, we always have 
$$
\sum_{v\in \ball(u,r)}\delta_P(u,v)
\lesssim \beta \cdot D^2_{\beta} \cdot
\sum_{v\in\ball(u,2R)}\frac{d(u,v)}{R}
\cdot \vee_P(u,v).
$$
\end{theorem}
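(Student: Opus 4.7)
My plan is to realize the random set $P$ by a two-stage procedure: a base ``random ball'' construction that delivers the in-expectation separation bound (b), followed by a deterministic cleanup step that upgrades the per-instance guarantee to property (c). Property (a) will be immediate from the construction, and (b) will need to be rechecked after the cleanup is applied.

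For the first stage, I would draw a random center $c\in X$ together with a random radius $\rho\in[r,R]$ from a truncated exponential distribution with rate $\lambda$ satisfying $\lambda R=\Theta(D_\beta)$, and initialize $P_0=\ball(c,\rho)$. Since $\rho\leq R$, the diameter of $P_0$ is at most $2R$, which yields (a) provided the cleanup keeps $P\subseteq\ball(c,R)$. For property (b), a CKR-style calculation shows that for any pair $u,v$ with $d(u,v)\leq R$, the conditional probability $\Pr\{\delta_{P_0}(u,v)=1\mid\vee_{P_0}(u,v)=1\}$ is at most $\lambda\cdot d(u,v)$ plus a boundary term coming from the truncation of $\rho$ at the endpoints $r$ and $R$; the choice of $\lambda$ ensures that this boundary term is $O(\beta^q)$ per pair. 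Summing the excess over $v\in\ball(u,R)$ and bounding it against the local mass $\sum_{v\in\ball(u,2R)}(d(u,v)/R)\cdot\vee_P(u,v)$ gives the right-hand side of (b). Selecting the center via the pruned candidate scheme of Section~\ref{sec:Cluster-select} is what lets the integration over $c$ close out into the per-$u$ bound rather than a global average.

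The main obstacle will be property (c), which is deterministic and cannot be achieved by a bare random ball: with positive probability $\rho$ lands near a cluster of points close to some $u$, simultaneously cutting many short edges. To address this I would apply a ``light-ball'' modification (Section~\ref{sec:light-ball}): after drawing $(c,\rho)$, deterministically shift the boundary of $P_0$ within a thin $r$-shell so that for every vertex $u$ the worst-case cut count $\sum_{v\in\ball(u,r)}\delta_P(u,v)$ is bounded by $O(\beta D_\beta^2)$ times the local mass $\sum_{v\in\ball(u,2R)}(d(u,v)/R)\vee_P(u,v)$. This is done by scanning the shell vertex-by-vertex and absorbing or expelling small sub-balls that would otherwise produce worst-case damage. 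The key verification is that the cleanup moves at most an $O(\beta)$ fraction of the mass per vertex, so (b) survives up to constants and (a) is trivially preserved because $P$ remains inside $\ball(c,R)$. These three bounds together establish Theorem~\ref{thm:itr-clustering}.
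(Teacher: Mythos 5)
Your proposal diverges from the paper's proof in two structurally important ways, and the second one is a genuine gap.

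First, the paper uses a \emph{deterministic} pivot $z = \arg\max_{u\in X}|\ball(u,R_0)|$, with randomness placed only in the radius. You propose a random center. The deterministic density-maximizing choice of $z$ is not cosmetic: it is what lets Lemma~\ref{lem:item-Y-helper} lower-bound the local mass $\sum_{v\in P}d(u,v)/R$ against $|P|/(2D_\beta)$ uniformly for every $u$ (since every ball $\ball(u,R_0)$ is no bigger than $\ball(z,R_0)$). With a random center you would need to control that ratio in expectation over $c$, which reintroduces exactly the kind of averaging over centers that forces $\log|X|$-type separation factors in CKR-style decompositions — the very dependence the theorem is designed to avoid. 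Your remark that ``the pruned candidate scheme'' makes the integration over $c$ close out per-$u$ gestures at this but doesn't supply the argument.

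Second, and more seriously, your route to property (c) is a post hoc ``cleanup'' step: draw $(c,\rho)$, then deterministically shift the boundary within an $r$-shell, absorbing or expelling sub-balls. This is where the proposal breaks. You assert the cleanup ``moves at most an $O(\beta)$ fraction of the mass per vertex, so (b) survives,'' but that claim is not self-evident and you give no mechanism to prove it: expelling a sub-ball to protect one vertex $u$ can create fresh cuts in $\ball(u',r)$ for a nearby $u'$, and there is no termination or non-interference argument. There is also no reason the modified boundary remains a metric ball, which is what keeps (a) trivially true. The paper avoids this entirely by a different device: it never repairs a bad draw. It defines the set $S\subset (3R_0,R_1]$ of radii for which (c) holds deterministically, proves via Lemma~\ref{lem:mu-S-large} (using the light-ball growth bound and Lemma~\ref{lem:Phi-S}) that $\mu(S)\geq R/2$, and then samples the radius \emph{only from $S$} by pushing the truncated-exponential CDF $F$ through the measure-preserving map $\pi_S^{inv}$. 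Property (c) then holds with probability $1$ by construction, and (b) is re-derived for the transported distribution via Claims~\ref{cl:apx:tailF} and~\ref{cl:ineq-on-F} — there is no separate ``does the cleanup damage (b)?'' question to answer. You also omit the Heavy Ball case ($|\ball(z,R_1)|\geq\rho_q(\beta)|\ball(z,R_0)|$), where $\mu(S)\geq R/2$ need not hold and the algorithm instead outputs $\ball(z,R_1)$ deterministically; some such dichotomy is necessary, since the light-ball measure bound uses the slow growth of $|\ball(z,\cdot)|$. As written, your proposal does not yield a proof of property (c), and hence of the theorem.
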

Informally, Theorem~\ref{thm:itr-clustering} is a ``single-cluster'' version of Theorem~\ref{thm: decompositon}, and there is a one-to-one correspondence between their properties.
In Section~\ref{sec:itr-clustering}, we show that Theorem~\ref{thm: decompositon} holds for $\calP$ if we assume that each partition $P \in \calP$ satisfies Theorem~\ref{thm:itr-clustering}. Thus, 
to obtain Theorem~\ref{thm: decompositon}, it remains to prove
Theorem~\ref{thm:itr-clustering}.

\subsection{Selecting a Single Cluster}\label{sec:single_clust}
We will use the following definitions. Let $r$ and $R$ be positive numbers with $r< R$. Define $\beta = {r}/{R}\leq \beta^*_q$ and $D_\beta = 2(q+1)\ln \nicefrac{1}{\beta}$ where $q\geq 1$. 
Let $R_0 = \nicefrac{R}{D_\beta}$ and $R_1 = R - R_0$. We let $\rho_q(\beta)=(\nicefrac{1}{\beta})^{q+1}$ (see Figure~\ref{fig:lightball-radii}). We choose $\beta^*_q$ so that $r < R_0 < R$ (see Section~\ref{sec:Cluster-select} for details).
\begin{figure}
    \centering
    \scalebox{0.75}{\resizebox {\columnwidth/2} {!} {
\begin{tikzpicture}

\fill[fill=white,draw=black,thick] (0,0) ellipse (5);
\fill[fill=white,draw=black,thick] (0,0) ellipse (4.2);


\draw[draw=black,thick] (0,0) ellipse (2.5);
\fill[fill=white,draw=black,thick] (0,0) ellipse (1);

\node (z) at (0,0) {\huge $z$};

\node (v1) at (1.2,0) {};
 \node (v2) at (-3.1,3.1) {};
\node (v3) at (-3.7,-3.7) {};
\node (v4) at (2.8, -2.8) {};
\node (v5) at (3.7, -3.7) {};
\node (v6) at (6.4, -2.5) {\huge$R_0 = \nicefrac{R}{D_\beta}$};
\node (v9) at (-5.9, -4.4) {\huge$R$};
\node (v10) at (-4.9, 4.9) {\huge$R_1$};
\node (v12) at (1.1, 2.4) {};
\node (v13) at (0.5, 1.1) {};

\node (v11) at (4.9, 4.9) {\huge$t$};

\draw[latex'-latex'] (z) -- node[above] {} (v1);
\draw[latex'-latex'] (z) -- node[above] {} (v2);
\draw[latex'-latex'] (z) -- node[above] {} (v3);
\draw[latex'-latex'] (v4) -- node[above] {} (v5);
\draw[latex'-latex'] (z) -- node[above] {} (v12);

\node (v7) at (3.25, -3.25) {};
\node (v8) at (0.5, -0.1) {};
\node (u1) at (-2.1, -2.2) {};
\node (u2) at (-2.1, 2.2) {};
\draw [black,dotted,thick]   (v6) to[out=200,in=70] (v7);
\draw [black,dotted,thick]   (v6) to[out=160,in=-30] (v8);
\draw [black,dotted,thick]   (v9) to[out=15,in=160] (u1);
\draw [black,dotted,thick]   (v10) to[out=-90,in=180] (u2);
\draw [black,dotted,thick]   (v13)  to[out=90,in=180] (v11);

\node (u3) at (-4.8,0) {};
\node (u4) at (-5,0) {};



\end{tikzpicture}}






    \caption{Balls with Different Radii}\label{fig:lightball-radii}
    $R > r > 0$, $q\geq 1$, $\beta =r/R$, $D_\beta = 2(q+1) \ln\nicefrac{1}{\beta}$,
    $R_0 = R/D_{\beta}$, $R_1 = R - R_0$.
\end{figure}

Given a metric space $(X,d)$ and parameters $r$ and $R$, our procedure for finding a random set $P \subseteq X$ begins by finding a pivot point $z$ with a densely populated neighborhood -- namely, $z$ is chosen such that a ball of radius $R_0$ around $z$ contains the maximum number of points. More formally,
\begin{align}\label{def:heavy_core}
    z=\arg \max_{u\in X}|\ball(u,R_0)|.
\end{align}
We refer to this ball of small radius around $z$ as the ``core'' of the cluster. 
Our choice of the pivot $z$ is inspired by the papers by~\citet{CGW03,PueMil18,CharikarGS17}.
We then consider a ball of large radius $R_1$  around the pivot $z$ and examine the following two cases -- ``Heavy Ball'' and ``Light Ball''.
If this ball of large radius around $z$ is sufficiently populated, that is, if the number of points in $\ball(z,R_1)$ is at least  $(\nicefrac{1}{\beta})^{q+1}$ times the number of points in the core, we call this case ``Heavy Ball''. In the case of Heavy Ball, we will show that $P = \ball(z,R_1)$ (a ball around $z$ of radius slightly less than $R$) satisfies the properties of Theorem~\ref{thm:itr-clustering}.
In the case of ``Light Ball'', the ball of large radius around $z$ is not sufficiently populated. In this case, the algorithm finds a radius $t$ $(t \leq R)$ such that $P = \ball(z,t)$ satisfies the properties of Theorem~\ref{thm:itr-clustering}.    
In the following subsections we provide an overview of the proof for these two cases. A formal proof of Theorem~\ref{thm:itr-clustering} can be found in Section~\ref{sec:Cluster-select}. 

\begin{algorithm}[tb]
   \caption{Metric Space Partitioning Scheme}
   \label{alg:MPSP}
\begin{algorithmic}
   \STATE {\bfseries Input:} Metric Space $(X,d)$ and $r, R > 0$, $q\geq 1$.
   \STATE Define $t=0$, $X_1 = X$.
   \REPEAT
   \STATE $t = t+1$.
   \STATE $P_t =$ Cluster Select$(X_t, d, r, R, q)$.
   \STATE $X_{t+1} = X_t \setminus P_t$.
   \UNTIL $X_t = \varnothing$
   \STATE Output $(P_1, P_2, \ldots, P_t)$.
\end{algorithmic}
\end{algorithm}
\begin{algorithm}[t]
   \caption{Cluster Select}
   \label{alg:ball-select}
\begin{algorithmic}
   \STATE {\bfseries Input:} Metric space $(X,d)$ and $r, R > 0$, $q\geq 1$ \\
   \STATE Define: $\beta = \nicefrac{r}{R}, D_\beta = 2(q+1)\ln \nicefrac{1}{\beta}$ .\\
   \STATE Define: $R_0 = \nicefrac{R}{D_\beta}, R_1 = R - R_0, \rho_q(\beta) = (\nicefrac{1}{\beta})^{q+1}$.\\
   \STATE Select  $z=\arg \max_{u\in X}|\ball(u,R_0)|.$\\
   \IF{$|\ball(z, R_1)| \geq \rho_q(\beta) \cdot |\ball(z, R_0)|$}
   \STATE Set $P = \ball(z, R_1)$.
   \ELSE 
   \STATE Consider $S$ as stated in Definition~\ref{def:set_S}.
   \STATE Consider $\pi^{inv}_S$ as stated in Definition~\ref{def:function-pi}.
   \STATE Let $F$ be the cumulative distribution function stated in Definition~\ref{def:F}.
   \STATE Choose a random $x\in [0,R/2]$ according to $F$.
   \STATE Set $P = \ball(z, \pi^{inv}_S(x))$.
   \ENDIF
   \STATE Output $P$.
\end{algorithmic}
\end{algorithm}

\subsubsection{Heavy Ball}
The Heavy Ball $P$ is a ball of radius $R_1$ around $z$ which contains many points. As the diameter of $P$ is $2R_1 < 2R$, it is easy to see that a Heavy Ball satisfies property $(a)$ of Theorem~\ref{thm:itr-clustering}. We now focus on showing that properties $(b)$ and $(c)$ hold for Heavy Ball.
Observe 
as $z$ was chosen according to~(\ref{def:heavy_core}), for every point $u \in X\setminus \{z\}$, $u$ has a less populated neighborhood of radius $R_0$ than that of $z$. 
This combined with the fact that $\ball(z, R_1)$ is heavy, implies that for every $u$, there are sufficiently many points in $P$ at a distance of at least $R_0$ from $u$.
Thus, for any point $u \in X$, we can expect the sum of distances between $u$ and the points in $P$ to be large. In fact, we show that the left hand sides of properties $(b)$ and $(c)$ can be charged to $\sum_{v \in P} \frac{d(u,v)}{R}$, the sum of distances between $u$ and the points in $P$. For points $u$ such that $d(z,u) \leq R$, $P \subseteq \ball(u, 2R)$ and hence, $\sum_{v \in \ball(u,2R)} \frac{d(u,v)}{R} \vee_P(u,v) \geq \sum_{v \in P} \frac{d(u,v)}{R}$. Thus, for every $u \in X$, we can charge the left hand sides of properties $(b)$ and $(c)$ to the quantity $\sum_{v \in \ball(u,2R)} \frac{d(u,v)}{R} \vee_P(u,v)$.
This allows us to conclude that a Heavy Ball satisfies Theorem~\ref{thm:itr-clustering}.

\subsubsection{Light Ball}\label{sec:light_ball_main}

In this subsection, we consider the case of $|\ball(z, R_1)| < \rho_q(\beta) \cdot |\ball(z, R_0)|$, which we call Light Ball. In the case of Light Ball, we choose a random radius $t\in (0,R_1]$ and set $P=\ball(z,t)$. Observe that property (a) of Theorem~\ref{thm:itr-clustering} holds trivially since the radius $t < R$.

Now consider property~$(c)$ of Theorem~\ref{thm:itr-clustering}. Recall that for every point $u\in X$, property~$(c)$ gives a bound on the total number of points separated from $u$ (by $P$) residing in a small ball $\ball(u,r)$, i.e., $\sum\limits_{v\in\ball(u,r)}\delta_P(u,v)$. Note that property (c) gives a deterministic guarantee on $P$.
Therefore, we choose a random radius $t\in(0,R_1]$ from the set of all radii for which property (c) of Theorem~\ref{thm:itr-clustering} holds. More specifically, we define the following set.
\begin{definition}\label{def:set_S}
Let $S$ be the set of all radii $s$ in $(3R_0,R_1]$ such that for every $u\in X$ set $P=\ball(z,s)$ satisfies:
\begin{align}\label{eq:newC}
\sum_{v\in \ball(u,r)}\delta_P(u,v)
\leq
25
\beta \cdot D^2_{\beta} \cdot
\sum_{v\in\ball(u,2R)}\frac{d(u,v)}{R}
\cdot \vee_P(u,v).
\end{align}
\end{definition}

The set $S$ can be computed in polynomial time since the number of distinct clusters $P=\ball(z,t)$ is upper bounded by the size of the metric space, $\vert X \vert$. By the same token, $S$ is a finite union of disjoint intervals.

Now we show why we can expect the set $S$ to be large. Consider $P=\ball(z,s)$ such that $s\in S$. 
As $S$ is computed according to Definition~\ref{def:set_S}, it implies that the boundary of $P$ is somewhat sparsely populated -- as for every $u \in X$, it bounds the number of points within a small neighborhood of $\ball(u,r)$ that are separated from $u$ (note that $\sum_{v \in \ball(u, r)} \delta_P(u,v)$ is trivially $0$ for points $u$ that are not close to the boundary of $P$). 
Since $\ball(z,R_1)$ does not contain many points, the number of points in $\ball(z,s')$ cannot grow too quickly as we increase the radius $s'$ from $0$ to $R_1$. This suggests that for many of such radii $s'$, the ball $P = \ball(z, s')$ has a sparsely populated boundary, and hence the set $S$ should be large. In fact, we use the above argument to show that the Lebesgue measure of the set $S$ satisfies $\mu(S) \geq \nicefrac{R}{2}$. This will allow us to define a continuous probability distribution on $S$.

What remains to be shown is that for a random radius $t \in S$, the set $P=\ball(z,t)$ satisfies property $(b)$ of Theorem~\ref{thm:itr-clustering}. For this purpose we define a measure preserving transformation $\pi_S$ that maps an arbitrary measurable set $S$ to the interval $[0, \mu(S)]$.
\begin{definition}\label{def:function-pi}
Consider a measurable set $S\subset [0,R]$. Define function 
$\pi_S:[0,R]\to [0,\mu(S)]$ as follows
$\pi_S(x) = \mu([0,x]\cap S)$. Also, for $y\in [0,\mu(S)]$,
let 
$$\pi_S^{inv}(y) = \min\{x: \pi_S(x) = y\}.$$
\end{definition}
Recall that the set $S$ stated in Definition~\ref{def:set_S} is a finite union of disjoint intervals. In this case, what $\pi_S$ does is simply pushing the intervals in $S$ towards $0$, and thus, allowing us to treat the set $S$ as a single interval $[0, \mu(S)]$. For the rest of the proof overview, we assume that $S=[0, \mu(S)]$ and $\pi_S$ is the identity. This simplifies the further analysis of Theorem~\ref{thm:itr-clustering} immensely.

Next, we define a cumulative distribution function $F$ on $[0, \nicefrac{R}{2}] \subseteq [0, \mu(S)]$:
\begin{definition}\label{def:F}
Let $F:\;[0, R/2]\to [0,1]$ be a cumulative distribution function such that
\begin{align}\label{func_L1}
F(x) = 
\frac{1-e^{-\nicefrac{x}{R_0}}}{1-e^{-\nicefrac{R}{2R_0}}}.
\end{align}
\end{definition}

We choose a random $x\in [0,R/2]$ according to $F$ and set $P=\ball(z,\pi_S^{inv}(x))$ (see Algorithm~\ref{alg:ball-select}). Since we assume in this proof overview that $\pi_S$ is the identity, $P=\ball(z,x)$.
Now, we show that the radius $x$ chosen in such a manner guarantees that the cluster $P$ satisfies property $(b)$. Loosely speaking, the 
motivation behind our particular choice of cumulative distribution function $F$ is the following: For two points $u,v \in X$, function $F$ bounds the probability of $u$ and $v$ being separated by $P$, in terms of $D_\beta$ times the probability that either $u$ or $v$ lies in $P$. Unfortunately, this bound does not hold for points $u$ with $d(z,u)$ close to $\nicefrac{R}{2}$. However, the choice of parameters for function $F$ in Definition~\ref{def:F} gives us two desired properties.
Without loss of generality assume that $d(z,u) \leq d(z,v)$. Then, the probability that $P$ separates the points $u$ and $v$, $\Pr(\delta_P(u,v)) = \Pr(d(z,u) \leq x \leq d(z,v)) = F(d(z,v)) - F(d(z,u))$. Moreover, as $d(z,u) \leq d(z,v)$, the probability that either $u$ or $v$ lies in $P$, $\Pr(\vee_P(u,v)) = 1 - F(d(z,u))$. 
Thus, choosing $F$ according to Definition~\ref{def:F} ensures:
\begin{itemize}
    \item (Property I) $F(d(z,v)) - F(d(z,u))$ is bounded in terms of $D_\beta$ times $1 - F(d(z,u))$ (Please see Claim~\ref{cl:ineq-on-F} for a formal argument).\label{Prop:F_1}
    \item (Property II) The probability that the cluster $P$ includes points $u$ such that $d(z,u) > \nicefrac{R}{2} - R_0$, is very small (please see Claim~\ref{cl:apx:tailF}).\label{Prop:F_2}
\end{itemize}
In fact, (Property II) of function $F$ is the reason why we are able to guarantee that property $(b)$ satisfies~(\ref{eq:separating-condition}) only on average, with the error term coming from our inability to guarantee~(\ref{eq:separating-condition}) for points on the boundary. We refer the reader to Section~\ref{sec:L_one} for a formal proof. Thus, we conclude the case of Light Ball and show that it satisfies Theorem~\ref{thm:itr-clustering}.

\section{Proof of Theorem~\ref{thm: decompositon}}\label{sec:itr-clustering}

In this section, we present the proof of our main technical result -- Theorem~\ref{thm: decompositon} -- an algorithm for partitioning a given metric space $(X, d)$
into a number of clusters $\calP = (P_1, \dots, P_k)$ (where $k$ is not fixed).

Recall our iterative process for obtaining this partitioning -- Algorithm~\ref{alg:MPSP} -- which makes use of Theorem~\ref{thm:itr-clustering} in each iteration to select a cluster from
the set of not-yet clustered vertices.

The proof of Theorem~\ref{thm:itr-clustering} is presented in Section~\ref{sec:Cluster-select}.
We now present the proof of Theorem~\ref{thm: decompositon} assuming Theorem~\ref{thm:itr-clustering}.

\begin{proof}[Proof of Theorem~\ref{thm: decompositon}]
Property (a) of Theorem~\ref{thm:itr-clustering} guarantees that $\diam(P_i) \leq 2R$ for every $i\in [k]$ and thus property (1) of Theorem~\ref{thm: decompositon} holds.

We now show that property (2) holds. Fix $u\in X$. Consider iteration $i\in [k]$. Note that set $P_i$ satisfies property (b) of Theorem~\ref{thm:itr-clustering} regardless of what set $X_i$ we have in the beginning of iteration $i$. That is, for every set $Y\subset X$ and $u\in Y$, we have
\begin{multline}\label{eq:pi-part-b}
    \sum_{v\in \ball(u,R)\cap Y}\bigg(\Pr\big\{\delta_{P_i}(u,v) =1 \given X_i = Y\big\} - 
    D_{\beta}\frac{d(u,v)}{R}\,\Pr\{\vee_{P_i}(u,v)=1 \given X_i = Y\}\bigg)^+\\
    \lesssim    \beta^q \cdot 
    \E\Bigg[\sum_{v\in \ball(u,2R)\cap Y} \frac{d(u,v)}{R}
    \cdot \vee_{P_i}(u,v) \given X_i = Y\Bigg].
\end{multline}
We observe that inequality (\ref{eq:pi-part-b}) can be written as follows (for all $u\in X$).
\begingroup
\allowdisplaybreaks
\begin{multline}\label{eq:pi-part-b-prime}
    \sum_{v\in \ball(u,R)}\bigg(\Pr\big\{\delta_{P_i}(u,v) =1 \text{ and } u,v\in X_i\given X_i = Y\big\}\\ - 
    D_{\beta}\frac{d(u,v)}{R}\,\Pr\{\vee_{P_i}(u,v)=1 \text{ and } u,v\in X_i\given X_i=Y\}\bigg)^+\\
    \lesssim    \beta^q \cdot 
    \E\bigg[\sum_{v\in \ball(u,2R)} \frac{d(u,v)}{R}
    \cdot \vee_{P_i}(u,v)
    \cdot \one\left\{u,v\in X_i\right\} \given X_i= Y\bigg].
\end{multline}
\endgroup

If $u\notin Y$, then all terms in (\ref{eq:pi-part-b-prime}) are equal to 0, and the inequality 
trivially holds. If $u\in Y$, then corresponding terms in (\ref{eq:pi-part-b}) and (\ref{eq:pi-part-b-prime}) with $v\in Y$ are equal to each other; all terms in (\ref{eq:pi-part-b-prime}) with $v\notin Y$ are equal to 0. 
Denote the event that $u,v\in X_i$ by $\calE_{vi}$ (that is, $\calE_{vi}$ happens if both points $u$ and $v$ are not clustered at the beginning of iteration $i$).
We take the expectation of  (\ref{eq:pi-part-b-prime}) over $X_i=Y$ and add up the inequalities over all $i\in [k]$. Using the subaddivity of function $x\mapsto x^+$, we obtain
\begin{equation} \label{eq:pi-part-b-prime2}
\begin{aligned} 
    &\sum_{\substack{v\in \ball(u,R)}}\Bigg(\sum_{i\in[k]}\Pr\big\{\delta_{P_i}(u,v) =1 \text{ and } \calE_{vi}\big\}
    - D_{\beta}\frac{d(u,v)}{R}\,\Pr\{\vee_{P_i}(u,v)=1 \text{ and } \calE_{vi}\}\Bigg)^+\\
    &\leq \sum_{\substack{v\in \ball(u,R) \\ i\in[k]}}\Bigg(\Pr\big\{\delta_{P_i}(u,v) =1 \text{ and } \calE_{vi}\big\} 
   - D_{\beta}\frac{d(u,v)}{R}\,\Pr\{\vee_{P_i}(u,v)=1 \text{ and } \calE_{vi}\}\Bigg)^+\\
    &\lesssim \beta^q \cdot 
    \E\Bigg[\sum_{\substack{v\in \ball(u,2R)\\i\in[k]}} \frac{d(u,v)}{R}
    \cdot \vee_{P_i}(u,v) \cdot \one\{\calE_{vi}\}\Bigg].
\end{aligned}
\end{equation}
Now consider any $v \in X \setminus \{u\}$. If $u$ and $v$ are separated by the partitioning $\calP$,
then they are separated at some iteration $i$. That is, for some $i\in [k]$:
\begin{itemize}
\item $\calE_{vi}$ happens (in other words, $u$ and $v$ are not clustered at the beginning of iteration $i$)
\item $\delta_{P_i}(u,v) = 1$ (exactly one of them gets clustered in iteration $i$)
\end{itemize}
Further, there is exactly one $i$ such that both events above happen. On the other hand, if $u$ and $v$ are not separated by $\calP$  then $\delta_{P_i}(u,v) = 0$ for all $i\in [k]$. We conclude that
\begin{equation}\label{eq:sum-core-1}
\one\{\calP(u)\neq \calP(v)\big\} = \sum_{i\in [k]}
\one\big\{\delta_{P_i}(u,v) = 1 \text{ and } \calE_{vi}\big\}.
\end{equation}
In particular, the expectations of the expressions on both sides of (\ref{eq:sum-core-1}) are equal: 
\begin{equation}\label{eq:sum-1}
    \Pr\{\calP(u)\neq \calP(v)\} = \sum_{i\in[k]}\Pr\big\{\delta_{P_i}(u,v) =1 \text{ and } \calE_{vi}\big\}.
\end{equation}    
Now consider the first iteration $i$ at which at least one of the vertices $u$ and $v$ gets clustered. Note that (i) event $\calE_{vi}$ happens and (ii) $\vee_{P_i}(u,v) = 1$ (that is, (i) both points $u$ and $v$ are not clustered at the beginning of iteration $i$; (ii) but at least one of them gets clustered in iteration $i$). Further, for $j < i$, 
$\vee_{P_i}(u,v) = 0$ and for $j > i$, $\calE_{vj}$ does not happen. We conclude that event ``$\vee_{P_i}(u,v) = 1$ and $\calE_{vi}$'' happens exactly for one value of $i\in[k]$. Therefore,
\begin{equation}\label{eq:sum-core-2}
\sum_{i\in k} \vee_{P_i}(u,v)\cdot \one\{\calE_{vi}\} = 1
\end{equation}
and
\begin{align}\label{eq:sum-2}
\sum_{i\in [k]}& \Pr\{\vee_{P_i}(u,v) = 1 \text{ and } \calE_{vi}\} = \sum_{i\in [k]} \E[\vee_{P_i}(u,v) \one\{\calE_{vi}\}] = 1.
\end{align}
Plugging (\ref{eq:sum-1}) and (\ref{eq:sum-2}) into (\ref{eq:pi-part-b-prime2}), we obtain
\begin{align*}
    \sum_{\substack{v\in \ball(u,R) }}\Bigg(&\Pr\big\{\calP(u) = \calP(v)\big\} - 
    D_{\beta}\frac{d(u,v)}{R}\Bigg)^+
    \lesssim    \beta^q \cdot 
    \E\bigg[\sum_{\substack{v\in \ball(u,2R)}} \frac{d(u,v)}{R}\bigg].
\end{align*}
We conclude that property (2) holds.
Next, we show that property (3) holds for every $u \in X$. As in the analysis of property (2),
we consider some iteration $i$. Then property (c) of
Theorem~\ref{thm:itr-clustering} guarantees that if $u\in X_i$ then
\begin{align}\label{eq:pi-part-c}
       \sum_{i \in [k]} &\sum_{v \in \ball(u,r)\cap X_i}  \delta_{P_i}(u,v) \lesssim  
        \sum_{i \in [k]} \beta \cdot D^2_{\beta} \cdot \Bigg( \sum_{v\in\ball(u,2R)\cap X_i}\frac{d(u,v)}{R} \cdot \vee_{P_i}(u,v)\Bigg)
\end{align}
We rewrite~(\ref{eq:pi-part-c}) as follows:
\begin{align*}
    \sum_{i \in [k]} \sum_{v \in \ball(u,r)}  \delta_{P_i}(u,v) \cdot \one\{\calE_{vi}\}
    \lesssim\sum_{i \in [k]} \beta \cdot D^2_{\beta} \cdot \Bigg( \sum_{v\in\ball(u,2R)}\frac{d(u,v)}{R} \cdot \vee_{P_i}(u,v)\cdot \one\{\calE_{vi}\}\Bigg).
\end{align*}
Note that this inequality holds for all $u\in X$: if $u\in X_i$, it is equivalent to (\ref{eq:pi-part-c});
if $u\notin X_i$, then both sides are equal to 0, and the inequality trivially holds.
Using formulas (\ref{eq:sum-core-1}) and (\ref{eq:sum-core-2}), we get
$$
       \sum_{v \in \ball(u,r)} \one\big\{\calP(u)\neq \calP(v)\big\}  \lesssim \beta \cdot D^2_{\beta} \cdot  \sum_{v\in\ball(u,2R)}\frac{d(u,v)}{R}.
$$
Therefore, property (3) holds.
\end{proof}
\section{Proof of Theorem~\ref{thm:itr-clustering}}\label{sec:Cluster-select}
\begin{figure}
    \centering
    \scalebox{0.75}{\resizebox {\columnwidth/2} {!} {
\begin{tikzpicture}

\fill[fill=white,draw=black,thick] (0,0) ellipse (5);
\fill[fill=white,draw=black,thick] (0,0) ellipse (4.2);

\fill[fill=black!40!white,draw=black!20!white] (0,0) ellipse (2.7);
\fill[fill=white,draw=white] (0,0) ellipse (2.3);

\draw[draw=black,thick] (0,0) ellipse (2.5);
\fill[fill=white,draw=black,thick] (0,0) ellipse (1);

\node (z) at (0,0) {\huge $z$};

\node (v1) at (1.2,0) {};
 \node (v2) at (-3.1,3.1) {};
\node (v3) at (-3.7,-3.7) {};
\node (v4) at (2.8, -2.8) {};
\node (v5) at (3.7, -3.7) {};
\node (v6) at (6.4, -2.5) {\huge$R_0 = \nicefrac{R}{D_\beta}$};
\node (v9) at (-5.9, -4.4) {\huge$R$};
\node (v10) at (-4.9, 4.9) {\huge$R_1$};
\node (v12) at (1.1, 2.4) {};
\node (v13) at (0.5, 1.1) {};

\node (v11) at (4.9, 4.9) {\huge$t$};

\draw[latex'-latex'] (z) -- node[above] {} (v1);
\draw[latex'-latex'] (z) -- node[above] {} (v2);
\draw[latex'-latex'] (z) -- node[above] {} (v3);
\draw[latex'-latex'] (v4) -- node[above] {} (v5);
\draw[latex'-latex'] (z) -- node[above] {} (v12);

\node (v7) at (3.25, -3.25) {};
\node (v8) at (0.5, -0.1) {};
\node (u1) at (-2.1, -2.2) {};
\node (u2) at (-2.1, 2.2) {};
\draw [black,dotted,thick]   (v6) to[out=200,in=70] (v7);
\draw [black,dotted,thick]   (v6) to[out=160,in=-30] (v8);
\draw [black,dotted,thick]   (v9) to[out=15,in=160] (u1);
\draw [black,dotted,thick]   (v10) to[out=-90,in=180] (u2);
\draw [black,dotted,thick]   (v13)  to[out=90,in=180] (v11);

\node (u3) at (-4.8,0) {};
\node (u4) at (-5,0) {};

\node (v14) at (2.4, 1) {};
\node (v15) at (7.2, 1.2) {\huge$\gamma-$light shell};
\node (v16) at (7.8, 0.4) {\huge of width $r$};
\draw [black,dotted,thick]   (v14)  to[out=60,in=180] (v15);


\end{tikzpicture}}






    \caption{Light Ball}\label{fig:lightball}
    $R > r > 0$, $q\geq 1$, $\beta =r/R$, $D_\beta = 2(q+1) \ln\nicefrac{1}{\beta}$,
    $R_0 = R/D_{\beta}$, $R_1 = R - R_0$.
\end{figure}
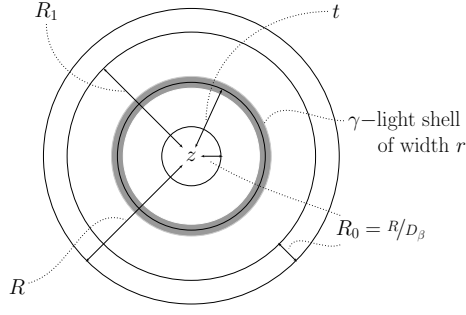
In Section~\ref{sec:iterative_clust}, we describe an iterative approach to finding a probabilistic metric decomposition for Theorem~\ref{thm: decompositon}. In this section, we show how to find one cluster $P$ of the partitioning. Given a metric space $(X,d)$ and positive numbers $r$ and $R$, our algorithm selects a subset $P \subseteq X$ that satisfies the three properties listed in Theorem~~\ref{thm:itr-clustering}. Recall that $\beta = {r}/{R}$, $D_\beta = 2(q+1)\ln \nicefrac{1}{\beta}$, $R_0 = \nicefrac{R}{D_\beta}$, $R_1 = R - R_0$ and $\rho_q(\beta)=(\nicefrac{1}{\beta})^{q+1}$ (see Figure~\ref{fig:lightball}). In this proof, we assume that $\beta = r/R$ is sufficiently small (i.e, $\beta\leq \beta^*_q$ for some small $\beta^*_q=\Theta\big(\frac{1}{(q\ln (q+1)}\big)$ and, consequently, 
$R_0= \nicefrac{R}{D_{\beta}}$ is also small. Specifically, we assume that $r<R_0<R_1<R$ and $R_0+r < R_1/100$.

Our algorithm for selecting the cluster $P$ starts by picking a pivot point $z$ that has the most points within a ball of small radius $R_0$.
That is, $z$ is the optimizer to the following expression:
\begin{align}\label{pvt-condition}
z=\arg \max_{u\in X}|\ball(u,R_0)|.
\end{align}
The algorithm then checks if the ball of a larger radius, $R_1$, around $z$ has significantly more points in it in comparison to the ball of radius $R_0$ around $z$. If the ratio of the number of points in these two balls exceeds $\rho_q(\beta)$, the algorithm selects the set of points $\ball(z, R_1)$ as our cluster $P$. We refer to this case as the ``Heavy Ball'' case. In Section~\ref{subsec:Heavy-ball}, we show that this set $P$ satisfies the properties of Theorem~\ref{thm:itr-clustering}.

If, however, $|\ball(z, R_1)| < \rho_q(\beta) \cdot |\ball(z, R_0)|$, then the algorithm outputs cluster $P = \ball(z, t)$ where $t \in (0, R]$ is chosen as follows. First, the algorithm finds the set $S$ of all radii $s\in(3R_0,R_1]$ for which the set $P = \ball(z,s)$ satisfies Definition~\ref{def:set_S}. Then, it chooses a random radius $t$ in $S$ (non-uniformly) so that random set $P = \ball(z,t)$ satisfies property~(b) of Theorem~\ref{thm:itr-clustering}. In Section~\ref{sec:L_inf}, we discuss how to find the set $S$ and show that $\mu(S)\geq R/2$ (where $\mu(S)$ is the Lebesgue measure of set $S$). Finally, in Section~\ref{sec:L_one}, we describe a procedure for choosing a random radius $t$ in $S$.
\subsection{Useful Observations}
In this section, we prove several lemmas which we will use  for analyzing both the ``Heavy Ball'' and ``Light Ball'' cases.
First, we show an inequality that will help us lower bound the right hand sides in inequalities (b) and (c) of Theorem~\ref{thm:itr-clustering}.

\begin{lemma}\label{lem:item-Y-helper}
Assume that $z$ is chosen according to~(\ref{pvt-condition}). Consider $t$ in $(3R_0,R_1]$ and $u$ in $X$ with $d(z,u)\in [2R_0,R]$. Let $P=\ball(z,t)$. Denote:
$$Y_P = 
\sum_{v\in \ball(u,2R)} \frac{d(u,v)\vee_P(u,v)}{R}.$$
Then, 
$|P|\leq 2 D_{\beta} Y_P.$
\end{lemma}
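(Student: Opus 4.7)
The strategy is a straightforward double-counting argument that exploits (i) the pivot choice of $z$, which controls the mass of any small ball, and (ii) the gap between the distance $d(z,u)\geq 2R_0$ and the radius $R_0$, which forces a chunk of $P$ to sit at distance at least $R_0$ from $u$.

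First I would verify the trivial but needed containment $P\subseteq \ball(u,2R)$: any $v\in P$ satisfies $d(u,v)\leq d(u,z)+d(z,v)\leq R+t\leq R+R_1\leq 2R$. Combined with the observation that $\vee_P(u,v)=1$ whenever $v\in P$, this gives the basic lower bound
\[
Y_P \;\geq\; \sum_{v\in P}\frac{d(u,v)}{R}.
\]
Next I would split $P$ according to whether $d(u,v)<R_0$ or $d(u,v)\geq R_0$, and bound only the far part using $d(u,v)/R\geq R_0/R=1/D_\beta$, obtaining
\[
Y_P \;\geq\; \frac{|P\setminus \ball(u,R_0)|}{D_\beta}.
\]
Rearranging and using the pivot condition $|\ball(u,R_0)|\leq |\ball(z,R_0)|$ yields
\[
|P| \;\leq\; |\ball(z,R_0)| + D_\beta\,Y_P.
\]

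It then remains to absorb the $|\ball(z,R_0)|$ term. The key observation is that $\ball(z,R_0)\subseteq P\setminus \ball(u,R_0)$: the inclusion in $P$ uses $R_0<t$ (which holds since $t>3R_0$), and the exclusion from $\ball(u,R_0)$ uses the lower bound $d(u,v)\geq d(u,z)-d(z,v)\geq 2R_0-R_0=R_0$ for any $v\in\ball(z,R_0)$. Combined with the previous display this gives $|\ball(z,R_0)|\leq |P\setminus\ball(u,R_0)|\leq D_\beta Y_P$, and hence
\[
|P| \;\leq\; D_\beta Y_P + D_\beta Y_P \;=\; 2 D_\beta Y_P,
\]
which is the claim.

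I don't anticipate any real obstacle: the whole argument hinges on pairing up the two distance lower bounds (triangle inequality giving $d(u,v)\geq R_0$ for $v\in \ball(z,R_0)$, and the trivial threshold $d(u,v)\geq R_0$ for $v\in P\setminus \ball(u,R_0)$) with the pivot optimality of $z$. The only mild care points are verifying the assumed inequalities between $R_0$, $r$, $R_1$, $R$ (already guaranteed by $\beta\leq \beta^*_q$) and making sure $P\subseteq \ball(u,2R)$ so that every contribution to the sum defining $Y_P$ is genuinely being counted.
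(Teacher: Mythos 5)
Your proof is correct and follows essentially the same argument as the paper: lower-bound $Y_P$ by $D_\beta^{-1}|P\setminus\ball(u,R_0)|$, then bound $|P\setminus\ball(u,R_0)|$ from both directions using the pivot optimality of $z$ (for the near part of $P$) and the triangle inequality together with $d(z,u)\geq 2R_0$ (showing $\ball(z,R_0)\subseteq P\setminus\ball(u,R_0)$). The paper combines the two resulting inequalities to conclude $|P\setminus\ball(u,R_0)|\geq|P|/2$ while you add them directly, but these are cosmetically different bookkeeping of the same double-counting.
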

\noindent\textbf{Remark: } Note that in Theorem~\ref{thm:itr-clustering}, the right side of inequality (b)  equals $\beta^q \E[Y_P]$, and the right side of inequality (c) equals $\beta \cdot D_\beta^2 \cdot Y_p$.

\begin{proof}
Observe that $P\subset \ball(u,2R)$. Hence,
$$
Y_P = \sum_{v\in \ball(u,2R)} \frac{d(u,v)\vee_P(u,v)}{R} \geq
\sum_{v\in P} \frac{d(u,v)}{R}.
$$
For every $v\in P\setminus \ball(u,R_0)$, we have 
$\frac{d(u,v)}{R}\geq \frac{R_0}{R}=D_{\beta}^{-1}$.
Thus, 
$$Y_P\geq D_{\beta}^{-1}\; |P\setminus \ball(u,R_0)|.$$
We need to lower bound the size of $P\setminus\ball(u,R_0)$. On the one hand, we have
\begin{align*}
|P\setminus \ball(u,R_0)|&\geq 
|P|-|\ball(u,R_0)|\\
&\geq |P|-|\ball(z,R_0)|.
\end{align*}
Here, we used that $\ball(z,R_0)$ is the largest ball of radius $R_0$ in $X$.  On the other hand, $\ball(z, R_0)\subset P\setminus \ball(u,R_0)$, since $d(z,u)\geq 2R_0$. Thus,
$|P\setminus \ball(u,R_0)|\geq |\ball(z, R_0)|$.
Combining two bounds on $|P\setminus \ball(u,R_0)|$,
we get the desired inequality 
$|P\setminus \ball(u,R_0)|\geq |P|/2$.
\end{proof}

We now provide a lemma that will help us verify property (b) of Theorem~\ref{thm:itr-clustering} for that point $u$.

\begin{lemma}\label{lem:item-b-helper}
Consider an arbitrary probability distribution of $t$ in $(3R_0,R_1]$. Let $P=\ball(z,t)$, where $z$ is chosen according to~(\ref{pvt-condition}). 
If for each point $u\in \ball(z, R)$ at least one of the following two conditions holds, then $P$ satisfies property (b) of Theorem~\ref{thm:itr-clustering} for all points $u$ in $X$.\\
\noindent Condition I:
\begin{equation}\label{eq:helper-b1}
\Pr\{t\geq d(z,u)-R_0\}
\lesssim 
\beta^{q+1}\cdot \frac{\E|\ball(z,t)| }{|\ball(z,R_0)|}.
\end{equation}\label{eq:cond-I}
\noindent \noindent Condition II: For every $v\in \ball(u,R_0)$, we have
\begin{align}\label{eq:helper-b2}
\Pr\big\{&\delta_P(u,v) = 1\big\} -
D_{\beta}\frac{d(u,v)}{R} \Pr\{\vee_P(u,v)=1\}
\lesssim \beta^{q+1}
.
\end{align}
\end{lemma}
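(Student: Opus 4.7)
The plan is to reduce the sum defining property (b) to contributions from $v\in\ball(u,R_0)$ only, and then handle the remaining sum by a case split on $d(z,u)$. The first (and most important) observation is that $\delta_P(u,v)\leq \vee_P(u,v)$ deterministically, since an edge can only cross the boundary of $P$ when at least one endpoint is in $P$. Whenever $d(u,v)>R_0$ we have $D_\beta\cdot d(u,v)/R>1$, so
$$\Pr\{\delta_P(u,v)=1\}\leq \Pr\{\vee_P(u,v)=1\}\leq D_\beta\tfrac{d(u,v)}{R}\,\Pr\{\vee_P(u,v)=1\},$$
and the positive part in property (b) vanishes for every such $v$. Hence it suffices to bound the sum of $\bigl(\Pr\{\delta_P(u,v)=1\}-D_\beta d(u,v)/R\cdot\Pr\{\vee_P(u,v)=1\}\bigr)^+$ over $v\in\ball(u,R_0)$ by $O(\beta^q\,\E Y_P)$, where $Y_P$ is the quantity in Lemma~\ref{lem:item-Y-helper}.

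Next I would split into three cases according to $d(z,u)$. If $d(z,u)>R$, then $t\leq R_1=R-R_0$ gives $u\notin P$, and for every $v\in\ball(u,R_0)$ the reverse triangle inequality yields $d(z,v)\geq d(z,u)-R_0>R_1\geq t$, so $v\notin P$ as well; thus $\delta_P(u,v)=0$ and the reduced sum is zero. If $d(z,u)<2R_0$, then since $t>3R_0$ both $d(z,u)$ and $d(z,v)$ (for $v\in\ball(u,R_0)$) are below $3R_0<t$, so $u,v\in P$ and again $\delta_P(u,v)=0$. Neither Condition I nor Condition II is actually invoked in these two regimes.

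The core case is $d(z,u)\in[2R_0,R]$, where Lemma~\ref{lem:item-Y-helper} applies and gives $\E Y_P\geq \E|P|/(2D_\beta)$. If Condition II holds, each of at most $|\ball(u,R_0)|\leq |\ball(z,R_0)|$ summands is $O(\beta^{q+1})$; since $t>3R_0$ we have $P\supseteq\ball(z,R_0)$, so $\E|P|\geq |\ball(z,R_0)|$, and the total is $O(\beta^{q+1}|\ball(z,R_0)|)\lesssim \beta^{q+1} D_\beta\,\E Y_P\lesssim \beta^q\,\E Y_P$, using $\beta D_\beta=O(1)$ for $\beta\leq \beta^*_q$. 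If Condition I holds instead, I would observe that whenever $t<d(z,u)-R_0$ both $u$ and every $v\in\ball(u,R_0)$ lie outside $P$, so $\delta_P(u,v)=0$ deterministically; hence
$$\Pr\{\delta_P(u,v)=1\}\leq \Pr\{t\geq d(z,u)-R_0\}\lesssim \beta^{q+1}\,\E|P|/|\ball(z,R_0)|.$$
Summing this bound over $v\in\ball(u,R_0)$ and using the maximality of $\ball(z,R_0)$ gives $O(\beta^{q+1}\,\E|P|)$, and again Lemma~\ref{lem:item-Y-helper} together with $\beta D_\beta=O(1)$ closes the gap to $O(\beta^q\,\E Y_P)$.

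The only real subtlety is the reduction step via $\delta_P\leq\vee_P$ and the absorption of a factor of $D_\beta$ using the smallness of $\beta$; the rest is bookkeeping with deterministic inclusions between the balls around $z$, $u$, and $P$, and I do not anticipate a serious obstacle.
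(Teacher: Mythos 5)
Your proof is correct and follows essentially the same route as the paper: reduce the sum in property (b) to $v\in\ball(u,R_0)$ via the $\delta_P\leq\vee_P$ observation, dispose of $d(z,u)>R$ and $d(z,u)<2R_0$ by showing $\delta_P(u,v)=0$ there, and in the core regime bound the sum by $O(\beta^{q+1}\E|P|)$ under either condition before invoking Lemma~\ref{lem:item-Y-helper} and $\beta D_\beta=O(1)$ to reach $O(\beta^q\E[Y_P])$.
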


\noindent\textbf{Remark:} 
This lemma makes the argument about the distribution of $t$ from the proof overview section (Section~\ref{sec:proof_overview}) more precise. As we discuss in subsection Light Ball~\ref{sec:light_ball_main}, we have chosen the distribution of $t$ (Cumulative distribution function $F$, Definition~\ref{def:F}) to satisfy two properties: (Property I) the probability that $u$ and $v$ are separated by $P$ is upper bounded by the probability that $u$ or $v$ is in $P$ times $O(D_{\beta})$; and (Property II) The probability that $t$ is close to $\pi_S^{inv}(R/2)$ is small. Thus, Condition I of Lemma~\ref{lem:item-b-helper} holds for $u$ with $d(z,u)$ that are sufficiently close to $\pi_S^{inv}(R/2)$, and Condition II holds for $u$ with for smaller values of $d(z,u)$.
\begin{proof}
Consider one term from the left hand side of property~(b) of Theorem~\ref{thm:itr-clustering} for some $u$ in $X$:
$$
\Big(\Pr\big\{\delta_P(u,v) =1 \big\} 
- D_{\beta}\frac{d(u,v)}{R} \cdot \Pr\{\vee_P(u,v)=1\}\Big)^+.
$$
Note that $\{\delta_P(u,v) = 1\}$ denotes the event that \textit{exactly one of the points} $u$ \textit{and} $v$ lies in $P$;
whereas $\{\vee_P(u,v) = 1\}$ denotes the event that \textit{at least} one of $u$ and $v$
lies in $P$. Thus, $\Pr\{\vee_P(u,v) = 1\} \geq \Pr\{\delta_P(u,v) = 1\}$. Hence, this expression is positive only if $D_{\beta}\cdot d(u,v)/R < 1$, which is equivalent to
$$d(u,v) < R/D_{\beta} = R_0.$$
Thus, in the left hand side of property~(b), we can consider only $v$ in $\ball(u,R_0)$ 
(rather than $\ball(u,R)$). Moreover, if $d(z,u)> R$, then for all $v \in \ball(u,R_0)$, we have $d(z,v) > R-R_0=R_1$ and, consequently, $\delta_P(u,v)=0$. Therefore, for such $u$, the left hand side of property (b) equals $0$, and the inequality (b) holds trivially. We will thus assume that $d(z,u)\leq R$ (which is equivalent to $u\in \ball(z,R)$). Similarly, since $t > 3R_0$, we will assume that $d(z,u)\geq 2R_0$ (otherwise, $u\in P$ and every $v\in \ball(u,R_0)$ is in $P$, and thus $\delta_P(u,v) = 0$).

We now show that if Condition I or II of~Lemma~\ref{lem:item-b-helper} holds for $u\in \ball(z,R)$ then property~(b) is satisfied for that $u$.

I. Suppose, the first condition holds for $u\in \ball(z,R)$. If $\delta_P(u,v) = 1$ then either $u\in P,\; v\not\in P$ or $v\in P,\; u\not\in P$. In the former case, $t\geq d(z,u)$; in the latter case, $t\geq d(z,v) \geq d(z, u) - R_0$. In either case, $t\geq d(z, u) - R_0$.
Using that $|\ball(u, R_0)| \leq |\ball(z, R_0)|$ by our choice of $z$ (see (\ref{pvt-condition})), we bound the left hand side of (b) as follows
\begin{align*}
\sum_{v\in \ball(u, R_0)}&\Pr\{\delta_P(u,v)=1\}
\\ &\leq 
|\ball(u, R_0)|\Pr\{t \geq d(z,u)-R_0\}
\\ &\leq 
|\ball(z, R_0)|\Pr\{t \geq d(z,u)-R_0\}
\end{align*}
We now use the inequality from Condition I to get the bound
\begin{align*}
\sum_{v\in \ball(u, R_0)}\Pr\{\delta_P(u,v)=1\}&
\lesssim \beta^{q+1} \E|\ball(z,t)|\\
&=\beta^{q+1}\E|P|.    
\end{align*}
Finally, by Lemma~\ref{lem:item-Y-helper}, we have the following bound on $\beta^{q+1}\E|P|$:
\begin{equation}\label{eq:beta-Y-P}
\beta^{q+1} \E|P|\leq 
\beta^{q+1}\cdot 2D_{\beta} \E[Y_P]
\leq
\beta^q\E[Y_P].
\end{equation}
Here, we used that $2\beta D_{\beta} = 2r/R_0 <1$ by our choice of $\beta^*_{q}$. The right hand side of the inequality in property (b) equals $\beta^q\E[Y_P]$. Thus, property (b) holds. 

\noindent II. Suppose now that the second condition holds for $u\in \ball(z,R)$. Then, each term in the left hand side of (b) is upper bounded by $O(\beta^{q+1})$. Hence, the entire sum is upper bounded by $O(\beta^{q+1}|\ball(u,R_0)|)$, which in turn is upper bounded by $O(\beta^{q+1}|\ball(z,R_0)|)$. Then,
\begin{align*}
\beta^{q+1}|\ball(z, R_0)| &\leq \beta^{q+1}|\ball(z, t)|
= \beta^{q+1} \E|P| \\
&\leq \beta^q\E[Y_P].
\end{align*}
The last inequality follows from~(\ref{eq:beta-Y-P}). We conclude that property~(b) of Theorem~\ref{thm:itr-clustering} holds.
\end{proof}

\subsection{Heavy Ball Case}\label{subsec:Heavy-ball}

In this subsection, we analyze the case when $|\ball(z, R_1)| \geq \rho_q(\beta) \cdot |\ball(z, R_0)|$. If this condition is met, then the algorithm outputs $P = \ball(z, R_1)$. We will show that Theorem~\ref{thm:itr-clustering} holds for such a cluster $P$. 

We first prove properties (a) and (b). Since the radius of $P$ is $R_1\leq R$, its diameter is at most $2R$. So property (a) of Theorem~\ref{thm:itr-clustering} holds. To show property (b), we apply Lemma~\ref{lem:item-b-helper} (item I) with $t=R_1$. Trivially, $\Pr\{t\geq d(z,u)-R_0\}\leq 1$ and $\E|\ball(z,t)| = |\ball(z,R_1)|\geq \rho_q(\beta) |\ball(z,R_0)|$. Thus, (\ref{eq:cond-I}) is satisfied and property (b) also holds.

We now show property (c) of Theorem~\ref{thm:itr-clustering}. Observe that
if $d(z,u)\notin [R_1-r,R_1+r]$, then $\delta_P(u,v)=0$ for all $v\in \ball(u,r)$ (because $P=\ball(z,R_1)$). Hence, for such $u$, property (c) holds. Thus, we assume that $u\in [R_1-r,R_1+r]\subseteq [2R_0,R]$.

We bound the left hand side of (c) as follows:
\begin{align*}\sum_{v\in \ball(u,r)}\delta(u,v)&\leq 
|\ball(u,r)|\leq |\ball(u,R_0)| \\ &\qquad \qquad \qquad \leq |\ball(z,R_0)|,
\end{align*}
here we first use that $r \leq R_0$ and then that $z$ satisfies (\ref{pvt-condition}).
Since we are in the Heavy Ball Case, we have $|P|\geq \rho_q(\beta)|\ball(z,R_0)|$. Therefore,
$$\sum_{v\in \ball(u,r)}\delta(u,v)\leq |P|/\rho_q(\beta).$$
By Lemma~\ref{lem:item-Y-helper}, the right hand side is upper bounded by 
$$2D_{\beta} Y_P/\rho_{q}(\beta) = 
2D_{\beta} \, \beta^{q+1} \, Y_P
\lesssim  \beta D^2_{\beta}\, Y_P.$$
The right hand side of (c) equals 
$\beta D^2_{\beta}\, Y_P$. Hence, property (c) is satisfied.

Thus we have shown that Theorem~\ref{thm:itr-clustering} holds for the case of Heavy Balls. To complete the proof, we show that Theorem~\ref{thm:itr-clustering} also holds for the case 
of Light Balls -- we give this proof in Section~\ref{sec:light-ball}. 

\subsection{Light Ball Case}\label{sec:light-ball}

We now consider the case when 
$|\ball(z,R_1)|\leq \rho_q(\beta)\cdot |\ball(z,R_0)|$. 
Recall that $S$ is the set of all radii $s\in (3R_0,R_1]$ for which property (c)  of Theorem~\ref{thm:itr-clustering} holds (Definition~\ref{def:set_S}). The set $S$ can be found in polynomial time since the number of distinct balls $\ball(z,s)$ is upper bounded by the number of points in the metric space. We now recall map $\pi_S$ used in Algorithm~\ref{alg:ball-select}. 

\noindent{\textbf{Map $\pi_S$}}.
In Section~\ref{sec:single_clust}, we  define a measure preserving transformation $\pi_S$ that maps a given measurable set  $S\subset [0,R]$ to the interval $[0,\mu(S)]$ (Definition~\ref{def:function-pi}). We  need this transformation in Algorithm~\ref{alg:ball-select}. If $S$ is the union of several disjoint intervals (as in our algorithm) then $\pi_S$ simply pushes all intervals to the left so that every two consecutive intervals touch each other.
We show the following lemma.
\begin{lemma}\label{lem:function-pi}
For any measurable set $S$, $\pi_S$ is a continuous non-decreasing 1-Lipschitz function, and $\pi^{inv}_S$ is a strictly 
increasing function defined for all $y$ in $[0,\mu(S)]$.
Moreover, there exists a set $Z_0$ of measure zero such that for all $y \in [0,\mu(S)]\setminus Z_0$, we have $\pi_S^{inv} (y)\in S$.
\end{lemma}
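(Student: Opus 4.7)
The plan is to verify the three assertions in order. The first two (Lipschitz/monotonicity/continuity and existence of $\pi_S^{inv}$) are essentially immediate from the definition and only need a few lines. The last assertion (the existence of an exceptional null set $Z_0$) is where the only real content lies.

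First, I would observe that for any $0\le x_1\le x_2\le R$,
$$\pi_S(x_2)-\pi_S(x_1)=\mu\bigl((x_1,x_2]\cap S\bigr)\in[0,\,x_2-x_1],$$
which gives non-decreasingness and $1$-Lipschitzness in one shot; continuity follows from the Lipschitz bound. Since $\pi_S(0)=0$ and $\pi_S(R)=\mu(S)$, the intermediate value theorem shows that $\pi_S^{-1}(\{y\})$ is non-empty for every $y\in[0,\mu(S)]$; this preimage is closed (by continuity) and bounded, so attains its minimum, and $\pi_S^{inv}(y)$ is therefore well defined on all of $[0,\mu(S)]$. For strict monotonicity, if $y_1<y_2$ and $x_i:=\pi_S^{inv}(y_i)$, then $x_1\ge x_2$ would force $y_1=\pi_S(x_1)\ge\pi_S(x_2)=y_2$ by monotonicity of $\pi_S$, a contradiction; hence $x_1<x_2$.

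For the third assertion, the key observation is that $\pi_S$ is absolutely continuous (being Lipschitz) and, since $\pi_S(x)=\int_0^x\one_S(t)\,dt$, satisfies $\pi_S'=\one_S$ almost everywhere. The standard image-size inequality for absolutely continuous real-valued functions -- a consequence of the Lusin $(N)$ property together with the area formula -- gives
$$\mu^*\bigl(\pi_S(A)\bigr)\le\int_A|\pi_S'(x)|\,dx\quad\text{for every measurable }A\subseteq[0,R].$$
Taking $A=[0,R]\setminus S$ makes the right-hand side zero, so $\pi_S([0,R]\setminus S)$ is contained in a Borel null set $Z_0\subseteq[0,\mu(S)]$. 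If $y\in[0,\mu(S)]\setminus Z_0$, then $\pi_S^{inv}(y)$ cannot lie in $[0,R]\setminus S$: otherwise one would have $y=\pi_S(\pi_S^{inv}(y))\in\pi_S([0,R]\setminus S)\subseteq Z_0$, a contradiction. Hence $\pi_S^{inv}(y)\in S$, as required. The only non-routine point is invoking the image-size bound correctly; everything else is bookkeeping on the definitions of $\pi_S$ and $\pi_S^{inv}$.
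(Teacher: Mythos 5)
Your proof is correct and follows essentially the same route as the paper: both establish 1-Lipschitzness from $\pi_S(x)=\int_0^x \one_S$, derive $\pi_S'=\one_S$ a.e., and bound the measure of $\pi_S([0,R]\setminus S)$ by $\int_{[0,R]\setminus S}|\pi_S'|\,dx=0$; the paper writes this bound directly from absolute continuity where you invoke the image-size inequality (Luzin $N$ plus the area formula) by name. Your handling of the measurability of $\pi_S([0,R]\setminus S)$ via outer measure and a Borel null superset is a small but welcome tightening of a point the paper glosses over.
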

\begin{proof}
Note that $\pi_S^{inv}(y)$ is a right inverse for $\pi_S(x)$: $\pi_S(\pi_S^{inv}(y)) = y$ (but not necessarily a left inverse).
Let 
$$I_S(x) = 
\begin{cases}
    1,&\text{if } x\in S\\
    0,&\text{otherwise}
\end{cases}
$$
be the indicator function of set $S$.
Then $\pi_S(x) = \int_0^x I_S(t) dt$ (we use Lebesgue integration here). Since $0\leq I_S(t) \leq 1$, function $\pi_S$ is non-decreasing, $1$-Lipschitz, and absolutely continuous. By the Lebesgue differentiation theorem, $\pi_S(x)$ is almost everywhere differentiable and $\frac{d\pi_S(x)}{dx} = I_S(x)$ almost everywhere. Let $X_0 = [0,R]\setminus S$ and $Z_0 = \pi_S(X_0)$. Since $\pi_S$ is absolutely continuous and $I_S(x) = 0$ for $x\in X_0$, we have
$$\mu(Z_0)\leq\int_{X_0}\frac{d\pi_S(x)}{dx}dx = \int_{X_0}I_S(x) dx = 0.$$
Now if $y\notin Z_0$, then $\pi_S(\pi_S^{inv}(y)) = y\notin Z_0$, thus $\pi_S^{inv}(y)\notin X_0$ or, equivalently,
$\pi_S^{inv}(y)\in S$, as required.

Finally, we verify that $\pi_S^{inv}$ is strictly increasing. Consider $a,b \in [0,\mu(S)]$ with $a < b$. Note that $a = \pi_S(\pi_S^{inv}(a))$ and $b = \pi_S(\pi_S^{inv}(b))$. Thus, $\pi_S(\pi_S^{inv}(a))< \pi_S(\pi_S^{inv}(b))$. Since $\pi_S$ is non-decreasing, 
$\pi_S^{inv}(a) < \pi_S^{inv}(b)$.
\end{proof}

Note that if $S$ is a union of finitely many disjoint open intervals, then $Z_0$ is the image of the endpoints of those intervals under $\pi_S$.

\subsection{Clusters Satisfying Property (c) of Theorem~\ref{thm:itr-clustering}}
\label{sec:L_inf}
We first show that if $|\ball(z,R_1)| < \rho_q(\beta) \cdot |\ball(z,R_0)|$, then 
$\mu(S)\geq R/2$. To this end, we define a ball with a $\gamma$-light shell of width $r$.

\begin{definition}
We say that the ball of radius $t\geq r$ around $z$ has a $\gamma$-light shell of width $r$ if
$$|\ball(z,t+r)| - |\ball(z,t-r)| \leq \gamma
 \int_0^{t-r} |\ball(z,x)|\, dx.$$
\end{definition}

We let $S_{\gamma}$ be the set of all radii $t$ in the range $(3R_0, R_1]$ 
such that $\ball(z, t)$ has a $\gamma$-light shell of width $r$. We now show that (a) $S_{\gamma}\subset S$ and (b) $\mu(S_{\gamma})\geq {R}/{2}$ for 
$\gamma = 25r/R_0^2$.
and, therefore, $\mu(S)\geq {R}/{2}$.

\begin{lemma}\label{lem:S-gamma-subset-S}
We have $S_{\gamma}\subset S$.
\end{lemma}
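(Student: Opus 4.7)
Fix $t \in S_\gamma$ and set $P=\ball(z,t)$; I must verify inequality~(\ref{eq:newC}) at every $u\in X$. The plan is to bound the LHS by counting points in the thin shell $\ball(z,t+r)\setminus\ball(z,t-r)$ and then invoke the $\gamma$-light shell hypothesis, while lower-bounding the RHS by a Fubini rewrite of the same integral. If $\delta_P(u,v)=1$ for some $v\in\ball(u,r)$, then exactly one of $u,v$ lies in $P$; combined with $d(u,v)\leq r$, the triangle inequality forces $d(z,v)\in(t-r,t+r]$. Since distinct $v$'s give distinct indicators,
$$\sum_{v\in\ball(u,r)}\delta_P(u,v)\;\leq\;|\ball(z,t+r)|-|\ball(z,t-r)|\;\leq\;\gamma\int_0^{t-r}|\ball(z,x)|\,dx,$$
where the last inequality is the defining property of $S_\gamma$.

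For the RHS I would rewrite the integral via Fubini as $\int_0^{t-r}|\ball(z,x)|\,dx=\sum_{v\in\ball(z,t-r)}(t-r-d(z,v))$. Whenever the LHS is nonzero, the same case analysis also yields $d(z,u)>t-r$ (in either of the two sub-cases $u\in P$ or $u\notin P$), so for every $v\in\ball(z,t-r)$ I obtain
$$d(u,v)\;\geq\;d(z,u)-d(z,v)\;>\;(t-r)-d(z,v)\;\geq\;0,$$
and also $d(u,v)\leq d(z,u)+d(z,v)\leq 2t\leq 2R$, placing $v$ in $\ball(u,2R)$. Since $\ball(z,t-r)\subseteq P$, I also have $\vee_P(u,v)=1$. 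Hence
$$\sum_{v\in\ball(u,2R)}\frac{d(u,v)}{R}\vee_P(u,v)\;\geq\;\frac{1}{R}\sum_{v\in\ball(z,t-r)}(t-r-d(z,v))\;=\;\frac{1}{R}\int_0^{t-r}|\ball(z,x)|\,dx.$$

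Combining the two displays, the LHS of~(\ref{eq:newC}) is at most $\gamma R$ times the RHS. Plugging in $\gamma R=(25r/R_0^2)\cdot R=25\beta D_\beta^2$ (using $R_0=R/D_\beta$ and $\beta=r/R$) recovers the constant in~(\ref{eq:newC}) exactly, proving $t\in S$. The only step requiring care is the simultaneous verification of ``LHS nonzero $\Rightarrow d(z,u)>t-r$'' and ``$v\in\ball(z,t-r)\Rightarrow\vee_P(u,v)=1$ and $v\in\ball(u,2R)$''; once those are in hand, the rest is bookkeeping via the triangle inequality and Fubini.
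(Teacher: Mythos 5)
Your proof is correct and follows essentially the same route as the paper: bound the left side by the shell count $|\ball(z,t+r)|-|\ball(z,t-r)|$, invoke the $\gamma$-light condition, and lower-bound the right side by the Fubini identity $\int_0^{t-r}|\ball(z,x)|\,dx=\sum_{v\in\ball(z,t-r)}(t-r-d(z,v))$ after checking that $\ball(z,t-r)\subseteq P\cap\ball(u,2R)$. The only cosmetic difference is that you restrict the sum to $\ball(z,t-r)$ directly rather than summing over $\ball(z,t)$ with the positive-part truncation as the paper does, but these are the same quantity.
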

\begin{proof}
Consider a number $t$ from $S_{\gamma}$ and the ball of radius $t$ around $z$: $P=\ball(z,t)$. 
Let us pick an arbitrary point $u$. We are going to prove that inequality~(\ref{eq:newC}) holds and therefore $t\in S$. Consider $v\in \ball(u,r)$.
Observe that $\delta_P(u,v) = 1$ only if both $u$ and $v$ belong to the $r$ neighborhoods of $P$ and $X\setminus P$. Thus, if $\delta_P(u,v) = 1$, we must have
$d(z,u),d(z,v)\in [t-r,t+r]$. 
If $d(z,u)\notin [t-r,t+r]$, then 
the left side of~(\ref{eq:newC})
equals $0$, and we are done. Hence, we can assume that 
$d(z,u)\in [t-r,t+r]$. 

Using the observation above, we bound
the left hand side of (\ref{eq:newC}) as
$$\sum_{v\in \ball(u,r)}\delta_P(u,v)\leq 
|\ball(z,t+r)|-|\ball(z,t-r)|.$$

We now need to lower bound the right hand side of~(\ref{eq:newC}). Note 
that $\ball(u,2R)$ contains $\ball(z,t)$, since 
$$d(z,u)\leq t+r\leq R_1 + r = R-R_0+r<R,$$
and $t < R$. Thus, 
$$\sum_{v\in\ball(u,2R)} \frac{d(u,v)}{R}\vee_p(u,v)
\geq \frac{1}{R} \sum_{v\in\ball(z,t)} d(u,v)\vee_p(u,v).$$
For all $v\in \ball(z,t)\equiv P$, we have $\vee_P(u,v)=1$. Hence,
\begin{align}\label{eq: total_distance_eq}
    \sum_{v\in\ball(z,t)} d(u,v)\vee_p(u,v) = \sum_{v\in\ball(z,t)} d(u,v)
\end{align}
By the triangle inequality, we have
$$d(u,v)\geq (d(z,u)-d(z,v))^+\geq ((t-r) -d(z,v))^+.$$
Observe that
$$((t-r) -d(z,v))^+= \int_0^{t-r} \one\big\{d(z,v) \leq x\big\}\; dx.$$
Hence,~(\ref{eq: total_distance_eq}) is lower bounded by
\begin{multline*}
\sum_{v\in\ball(z,t)}\int_0^{t-r} \one\big\{d(z,v) \leq x\big\}\; dx
=
\int_0^{t-r} \sum_{v\in\ball(z,t)} \one\big\{d(z,v) \leq x\big\}\; dx
=
\int_0^{t-r} |\ball(z,x)|\; dx.
\end{multline*}
Since the ball of radius $t$ has a $\gamma$-light shell of width $r$, the expression above is, in turn,  lower bounded by 
$$\frac{|\ball(z,t+r)|-|\ball(z,t-r)|}{\gamma}.$$
Thus, the right hand side of inequality (\ref{eq:newC}) is lower bounded by 
$$
 \frac{25\beta D_{\beta}^2}{R}\cdot 
\frac{|\ball(z,t+r)|-|\ball(z,t-r)|}{\gamma}.
$$
This completes the proof of Lemma~\ref{lem:S-gamma-subset-S}, since
$$
\frac{25\beta D_{\beta}^2}{R}\cdot \frac{1}{\gamma}  = 
\frac{25\beta D_{\beta}^2  R_0^2}{25R\cdot r}= \frac{\left(\nicefrac{r}{R}\right) D_\beta^2 \left(\nicefrac{R}{D_\beta}\right)^2}{Rr}= 1.
$$
\end{proof}


\begin{lemma}\label{lem:mu-S-large}
We have $\mu\big(S_{\gamma}\big)\geq \nicefrac{R}{2}$.
\end{lemma}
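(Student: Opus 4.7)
The plan is to bound the measure of the bad set $T = (3R_0, R_1]\setminus S_\gamma$ and conclude $\mu(S_\gamma) = (R_1 - 3R_0) - \mu(T) = (R - 4R_0) - \mu(T) \geq R/2$ (which, given that $R_0 = R/D_\beta$ is much smaller than $R$, reduces to showing $\mu(T)$ is bounded by roughly $R/2 - 4R_0$). Writing $f(x) = |\ball(z,x)|$ and $F(x) = \int_0^x f(y)\,dy$, every $t \in T$ fails the light-shell condition, so
\[
f(t+r) - f(t-r) \;>\; \gamma\, F(t-r) \qquad \text{for all } t\in T.
\]

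I would then integrate this pointwise inequality over $T$ and bound each side separately. For the right-hand side, I would use monotonicity of $f$: since $f(x) \geq f(R_0)$ for $x \geq R_0$, and $t-r \geq 3R_0 - r \geq 2R_0$, we have $F(t-r) \geq \int_{R_0}^{t-r} f(R_0)\,dx \geq R_0\, f(R_0)$, giving $\gamma\int_T F(t-r)\,dt \geq \gamma R_0 f(R_0)\,\mu(T)$. For the left-hand side, I would apply Fubini / double counting: each point $v\in X$ contributes $1$ to the shell mass at $t$ precisely when $d(z,v) \in (t-r, t+r]$, and thus
\[
\int_T [f(t+r) - f(t-r)]\,dt \;=\; \sum_{v} \mu\bigl(T\cap (d(z,v)-r, d(z,v)+r]\bigr) \;\leq\; 2r\, f(R_1+r),
\]
since only points $v$ with $d(z,v) \in (3R_0 - r, R_1 + r]$ can contribute, and each contributes at most $2r$. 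Plugging in $\gamma = 25r/R_0^2$ and combining the two bounds would yield $\mu(T) \lesssim R_0\cdot f(R_1+r)/f(R_0)$.

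To finish, I would invoke the Light Ball hypothesis $f(R_1) < \rho_q(\beta)\,f(R_0)$ and leverage the maximality of $z$ among $R_0$-balls to control $f(R_1+r)/f(R_0)$. The desired inequality $\mu(T) \leq R/2 - 4R_0$ ultimately follows once this ratio is shown to be at most $O(D_\beta)$, since then $\mu(T) \lesssim R_0 D_\beta = R$ with the right constant, giving $\mu(S_\gamma)\geq R/2$. A natural refinement is to apply the Fubini bound only on the truncated range $t\in (3R_0, R_1-r]$ (so that $t+r\leq R_1$ and one can use $f(R_1)$ instead of $f(R_1+r)$), and to handle the residual interval $(R_1-r, R_1]$ separately by a trivial length bound.

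The main obstacle I expect is precisely Step~5: squeezing enough structure out of the Light Ball hypothesis and the maximality of $z$ to keep the ratio $f(R_1+r)/f(R_0)$ comparable to $D_\beta$ rather than to $\rho_q(\beta)=(1/\beta)^{q+1}$, which is far too loose. Naively substituting the Light Ball bound gives $\mu(T) \lesssim R_0 \rho_q(\beta)$, which dominates $R$ and is useless. The fix must exploit that $z$ maximizes $|\ball(\cdot, R_0)|$ (so that mass near the boundary of $\ball(z,R_1)$ cannot be arbitrarily concentrated in a way that inflates $f(R_1+r)$ relative to $f(R_1)$), or alternatively restrict the integration domain so that $f(R_1+r)$ never appears. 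Either route needs to be executed carefully, and this is where all the quantitative work in the proof should concentrate.
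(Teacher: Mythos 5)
Your bookkeeping is sound — the Fubini computation of $\int_T[f(t+r)-f(t-r)]\,dt\leq 2r\,f(R_1+r)$ is correct, and reducing the lemma to bounding $\mu(T)$ is the right first move — but the core estimate cannot be salvaged in the way you propose, and this is not merely constant-chasing. Your single-pass integration of the failure inequality yields $\mu(T)\lesssim R_0\cdot f(R_1+r)/f(R_0)$, and this ratio genuinely can be as large as $\rho_q(\beta)=(1/\beta)^{q+1}$ for a light ball: the Light Ball hypothesis is an upper bound at exactly that threshold, and nothing rules out $f(R_1)/f(R_0)$ approaching it. Neither of your proposed fixes closes the gap. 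Maximality of $z$ only bounds $|\ball(u,R_0)|$ for arbitrary centres $u$; in a general metric space this says nothing about how many well-separated $R_0$-balls can be packed into $\ball(z,R_1)$, so it does not control $f(R_1+r)/f(R_0)$. Truncating the integration range to replace $f(R_1+r)$ by $f(R_1)$ still leaves a ratio of order $\rho_q(\beta)$. The issue is structural: a one-shot integration of the pointwise failure inequality is inherently linear in the growth ratio, whereas the truth is logarithmic.

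The paper's proof uses a compounding, Gronwall-type argument (Lemmas~\ref{lem:Phi} and~\ref{lem:Phi-S}) rather than a single integration. Writing $\Phi(t)=|\ball(z,t+3R_0)|/|\ball(z,3R_0)|$, each radius at which the light-shell condition fails forces $\Phi$ to multiply by roughly $(1+\gamma/\eta)$ with $\eta=\sqrt{\gamma/((e-1)r)}$, and chaining these multiplicative gains over the failure set gives $\Phi(R')\geq e^{\eta\,\mu(\text{bad})-1}$. Paired with $\Phi(R')\leq\rho_q(\beta)$ from the Light Ball hypothesis, this yields $\mu(\text{bad})\leq(1+\ln\rho_q(\beta))/\eta\approx R_0(1+D_\beta/2)\cdot O(1)$, which after tuning constants leaves $\mu(S_\gamma)\geq R/2$. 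In effect the correct bound scales as $R_0\ln\bigl(f(R_1)/f(R_0)\bigr)$, not $R_0\cdot f(R_1)/f(R_0)$; the exponential gap between $\ln\rho_q(\beta)=D_\beta/2$ and $\rho_q(\beta)$ itself is exactly what your linear estimate loses. To rescue your route you would have to iterate your own inequality — partitioning $T$ and tracking how each piece forces $F$ to grow by a constant factor — which is precisely the content of the inductive proof of Lemma~\ref{lem:Phi}.
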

To prove this lemma, we use the following result from Appendix~\ref{sec:proof-lem-Phi-S}. 
\begin{lemma}\label{lem:Phi-S}
Consider a non-decreasing function $\Phi:[0,R]\to\bbR$ with $\Phi(0) = 1$ and $R>0$. Let $r \in (0,R]$ and $\gamma \leq (0,1/r]$.
Then, for the subset $S$ of numbers $t\in [0,R-r]$ 
for which inequality 
\begin{equation}\label{ineq:Phi-S}
\Phi(t+r)\geq \Phi(t) + \gamma \int_0^{t} \Phi(x) dx
\end{equation}
holds, we have $\Phi(R)\geq e^{\eta \mu(S)-1}$, where 
$\eta = \sqrt{\nicefrac{\gamma}{(e-1)r}}$, and 
$\mu(S)$ is the measure of set $S$.
\end{lemma}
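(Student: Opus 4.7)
}
The plan is to build an auxiliary potential $J(t) = \Phi(t) + \eta G(t)$, where $G(t) = \int_0^t \Phi(x)\,dx$ and $\eta > 0$ is a parameter to be chosen, and to show that $J$ grows multiplicatively at every point of $S$. Since $\Phi$ is non-decreasing with $\Phi(0) = 1$, we have the companion bound $G(t+r) \geq G(t) + r\,\Phi(t)$, which combined with the defining inequality $\Phi(t+r) \geq \Phi(t) + \gamma G(t)$ for $t \in S$ yields
\[
J(t+r) \;\geq\; (1 + \eta r)\,\Phi(t) \;+\; (\gamma + \eta)\,G(t) \qquad \text{for every } t \in S.
\]
Setting $\eta = \sqrt{\gamma/r}$ balances the two growth factors (since then $1 + \eta r = 1 + \gamma/\eta$), and both become a common multiple $\lambda := 1 + \sqrt{\gamma r}$ of the corresponding coefficients of $J(t)$. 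Hence $J(t+r) \geq \lambda\,J(t)$ for all $t \in S$.

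Second, I would chain this multiplicative growth by a greedy packing argument. Set $\tau_0 = \inf S$ and $\tau_{k+1} = \inf\{s \in S : s \geq \tau_k + r\}$, producing an $r$-separated sequence $\tau_0 < \tau_1 < \cdots < \tau_{n-1}$ inside $S$. Each $\tau_k$ ``uses up'' at most a length-$r$ slab of $S$, so maximality forces $n \geq \mu(S)/r$. Since both $\Phi$ and $G$ are non-decreasing, so is $J$, and the one-step bound propagates:
\[
J(R) \;\geq\; J(\tau_{n-1}+r) \;\geq\; \lambda\,J(\tau_{n-1}) \;\geq\; \lambda\,J(\tau_{n-2}+r) \;\geq\; \cdots \;\geq\; \lambda^{n}\,J(\tau_0) \;\geq\; \lambda^{\mu(S)/r}.
\]

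Third, I would convert this bound on $J(R)$ into the desired bound on $\Phi(R)$. From $G(R) \leq R\,\Phi(R)$ we get $J(R) \leq (1 + \eta R)\,\Phi(R)$, so
\[
\Phi(R) \;\geq\; \frac{(1 + \sqrt{\gamma r})^{\mu(S)/r}}{1 + \sqrt{\gamma/r}\,R}.
\]
Taking logarithms and applying elementary estimates for $\ln(1+x)$ on $x \in (0,1]$ (which applies because $\gamma r \leq 1$), one then derives the stated form $\Phi(R) \geq e^{\eta\mu(S) - 1}$ with $\eta = \sqrt{\gamma/((e-1)r)}$. To make the additive slack $-1$ absorb the denominator $1 + \sqrt{\gamma/r}\,R$, I anticipate having to either retune the parameter in the auxiliary potential slightly below $\sqrt{\gamma/r}$, or to split into the regimes $\eta R \leq e-1$ and $\eta R > e-1$ (handling the latter via the trivial bound $\Phi(R) \geq 1$ when $\mu(S)$ is small and via the chain otherwise).

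The main obstacle I expect is precisely this final constant calibration. The naive logarithmic bound $\ln(1 + \sqrt{\gamma r}) \geq \sqrt{\gamma r}/\sqrt{e-1}$ fails at the extremal point $\sqrt{\gamma r} = 1$ (since $\ln 2 \approx 0.693 < 1/\sqrt{e-1} \approx 0.763$), so matching the exact constant $\eta = \sqrt{\gamma/((e-1)r)}$ requires a more refined argument than plain chaining at the optimal $\eta = \sqrt{\gamma/r}$. The factor $e-1$ appearing in the denominator of $\eta$ is precisely the slack needed to trade the multiplicative gain along the chain against the polynomial correction $(1 + \eta R)$ while losing only a constant factor of $e$ in the end---which becomes the additive $-1$ in the exponent.
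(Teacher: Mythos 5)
Your approach — building a potential $J(t)=\Phi(t)+\eta G(t)$, showing multiplicative growth $J(t+r)\geq \lambda J(t)$ on $S$, and chaining via a greedy $r$-packing — is genuinely different from the paper's. The paper first proves the ``full-$S$'' case (Lemma~\ref{lem:Phi}) by a direct induction showing $\Phi(t)\geq e^{\eta t-1}$ pointwise, and then reduces general $S$ to that case by pulling $\Phi$ through the measure-preserving map $\pi_S$; your greedy chaining handles general $S$ directly, which is a nice structural simplification (the small wrinkle that $\inf S$ need not lie in $S$ is fixable by an $\epsilon$-perturbation).

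However, the gap you flag at the end is real and not merely a matter of calibration, and your two proposed repairs do not close it. The per-step factor your potential yields is $\lambda = 1+\sqrt{\gamma r}$, so the exponential rate per unit length of $S$ is $\ln(1+\sqrt{\gamma r})/r$. The target rate in the lemma is $\eta=\sqrt{\gamma/((e-1)r)}$, i.e., $\sqrt{\gamma r}/(\sqrt{e-1}\,r)$. Writing $x=\sqrt{\gamma r}\in(0,1]$, you need $\ln(1+x)\geq x/\sqrt{e-1}$, which fails for $x$ near $1$ (e.g., $\ln 2\approx 0.693 < 1/\sqrt{e-1}\approx 0.763$). Since the rate deficit is multiplicative in $\mu(S)$, no finite ``$-1$'' slack or regime split can absorb it: for $\gamma r$ near $1$ and $\mu(S)/r$ large, $\lambda^{\mu(S)/r}/(1+\eta_J R)$ is eventually strictly smaller than $e^{\eta\mu(S)-1}$. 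Retuning the potential parameter does not help either, because $\eta=\sqrt{\gamma/r}$ already maximizes $\min(1+\eta r,\,1+\gamma/\eta)$, so any other choice only decreases $\lambda$.

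The reason the paper achieves the sharper rate is that its induction integrates the inductive hypothesis directly into the defining inequality: it splits $\int_0^{t}\Phi$ into $\int_0^{1/\eta}$ (where $\Phi\geq 1$) and $\int_{1/\eta}^{t}$ (where $\Phi(x)\geq e^{\eta x-1}$), giving $\Phi(t+r)\geq e^{\eta t-1}(1+\gamma/\eta)$ with $\gamma/\eta=(e-1)\eta r$, and then uses $1+(e-1)u\geq e^{u}$ for $u\in[0,1]$. Your step uses only the coarse lower bound $G(t+r)-G(t)\geq r\Phi(t)$, which discards exactly the exponential build-up inside $[t,t+r]$ that the paper's argument exploits; that information loss is what shows up as the missing $\sqrt{e-1}$ factor. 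To salvage the potential-function route you would need a one-step estimate that feeds the accumulated growth of $\Phi$ back into the increment of $G$, which effectively brings you back to the paper's induction.
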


\begin{proof}[Proof of Lemma~\ref{lem:mu-S-large}]
We apply Lemma~\ref{lem:Phi-S} to the function
$$\Phi(t) = \frac{|\ball(z,t+3R_0)|}{|\ball(z,3R_0)|}$$
with parameters $r'= 2r$, $R'=R_1-3R_0-r$, and $\gamma=25r/R^2_0$. Note that to be able to apply Lemma~\ref{lem:Phi-S} we need $\gamma<1/r'$ which is equivalent to $\beta D_{\beta}<\nicefrac{1}{5\sqrt{2}}$. The latter holds due to $\beta$ being sufficiently small, i.e., $\beta\leq \Theta\big(\frac{1}{(q\ln (q+1)}\big)$. Observe that $\Phi(0)=1$ and
\begin{align*}
\Phi(R') & \leq \frac{|\ball(z,R_1)|}{|\ball(z,3R_0)|} 
\leq \frac{\rho_q(\beta)|\ball(z,R_0)|}{|\ball(z,R_0)|} 
= \rho_q(\beta).
\end{align*}
Here, we used that the $\ball(z,R_1)$ is light. From Lemma~\ref{lem:Phi-S}, we get that 
$\Phi(R')\geq e^{\eta' \mu(S')-1}$, where 
$\eta' = \sqrt{\nicefrac{\gamma}{(e-1)r'}}$, and $S'$ is the 
set of $t$ for which Inequality~(\ref{ineq:Phi-S}) holds. Thus,
\begin{align*}
\mu(S')&\leq \frac{1+ \ln \Phi(R')}{\eta'} \leq \frac{1 + \ln \rho_q(\beta)}{\eta'} = \frac{1+D_{\beta}/2}{\eta'}\\
&= 
\sqrt{\frac{(e-1)r'}{\gamma}} \cdot (1+D_{\beta}/2)\\
&= \sqrt{\frac{2(e-1)r}{25r}}\cdot
R_0 \cdot (1+D_{\beta}/2)\\
&= \sqrt{\frac{2(e-1)}{25}}\cdot 
(R_0+R/2)< 0.4 (R + R_0).
\end{align*}
where we used $R_0 \cdot D_\beta = R$ and that $\sqrt{2(e-1)} < 2$. Therefore for the measure of the set $S''=[0,R']\setminus S'$ is at least
$\mu(S'')\geq ((R-R_0)-3R_0-r) - 0.4 (R + R_0) \geq R/2$.
Here, we relied on our assumption that 
$R_0 + r < R_1/100$.

We claim that $S'' + 3R_0 + r\subset S_{\gamma}$. Consider an arbitrary $t\in S''$. First, observe that 
$t+3R_0+r\in (3R_0,R_1]$. Then,
\begin{align*}
\frac{|\ball(z,t+3R_0 + r')|}{|\ball(z,3R_0)|} - \frac{|\ball(z,t+3R_0)|}{|\ball(z,3R_0)|} &=\Phi(t+r') - \Phi(t) \\ 
&< \gamma\int_0^t\Phi(x) dx = 
\gamma \int_0^{t} \frac{|\ball(z,x+3R_0)|}{|\ball(z,3R_0)|}  dx.
\end{align*}
For $t' = t+ 3R_0 + r$, we get
\begin{align*}
|\ball(z,t'+r)|  - |\ball(z,t'-r)| 
&< \gamma \int\displaylimits_{\mathclap{0}}^{t'-3R_0-r} |\ball(z,x+3R_0)|  dx\\
&= \gamma \int\displaylimits_{3R_0}^{t'-r} |\ball(z,x)|  dx
< \gamma \int\displaylimits_{0}^{t'-r} |\ball(z,x)|  dx.
\end{align*}
Thus, $t'\in S_{\gamma}$. This finishes the proof.
\end{proof}
Lemma~\ref{lem:mu-S-large} together with Lemma~\ref{lem:S-gamma-subset-S} imply the following corollary.
\begin{corollary}
Let $S$ be the set defined in Definition~\ref{def:set_S}. Then, $\mu(S)\geq R/2.$
\end{corollary}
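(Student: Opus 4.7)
The plan is simply to chain the two lemmas immediately preceding the corollary, namely Lemma~\ref{lem:S-gamma-subset-S} and Lemma~\ref{lem:mu-S-large}. Lemma~\ref{lem:S-gamma-subset-S} asserts that $S_\gamma \subseteq S$, where $S_\gamma$ (with $\gamma = 25 r / R_0^2$) is the set of radii $t \in (3R_0, R_1]$ for which the ball $\ball(z,t)$ has a $\gamma$-light shell of width $r$. Since Lebesgue measure is monotone under set inclusion, this containment immediately yields $\mu(S) \geq \mu(S_\gamma)$.

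Next I would invoke Lemma~\ref{lem:mu-S-large}, which is exactly where the ``Light Ball'' hypothesis $|\ball(z,R_1)| < \rho_q(\beta) \cdot |\ball(z,R_0)|$ is used. That lemma provides the lower bound $\mu(S_\gamma) \geq R/2$. Combining these two gives
\[
\mu(S) \;\geq\; \mu(S_\gamma) \;\geq\; R/2,
\]
which is precisely the claim of the corollary.

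There is no real obstacle at the level of the corollary itself; it is purely a bookkeeping step stitching together two already-established results. The genuine work lies upstream: Lemma~\ref{lem:S-gamma-subset-S} requires verifying that the $\gamma$-light-shell condition implies the pointwise inequality~(\ref{eq:newC}) defining $S$, via the triangle-inequality-based lower bound on $\sum_{v \in \ball(z,t)} d(u,v)$; and Lemma~\ref{lem:mu-S-large} relies on applying the functional inequality Lemma~\ref{lem:Phi-S} to $\Phi(t) = |\ball(z,t+3R_0)|/|\ball(z,3R_0)|$, with the parameters tuned so that the light-ball bound $\Phi(R') \leq \rho_q(\beta)$ forces the set of ``bad'' radii to have small measure. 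Both are already handled in the paper, so the corollary follows in one line.
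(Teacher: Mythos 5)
Your proposal is correct and matches the paper's own argument exactly: the text preceding the corollary states that it follows by combining Lemma~\ref{lem:mu-S-large} with Lemma~\ref{lem:S-gamma-subset-S}, which is precisely the chain $\mu(S) \geq \mu(S_\gamma) \geq R/2$ that you give. Nothing further is needed.
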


\subsection{Clusters Satisfying Property (b) of Theorem~\ref{thm:itr-clustering}}\label{sec:L_one}

We now show how to choose a random $t\in S$, so that the random cluster $P = \ball(z,t)$ 
satisfies property (b) of Theorem~\ref{thm:itr-clustering}. We first choose 
a random $x\in[0,R/2]$ with the cumulative distribution function $F(x)$ defined in Definition~\ref{def:F},
and then let $t=\pi^{inv}_S(x)$, where 
$S\subset (3R_0, R_1]$ is the set obtained in the previous section. Note that by Lemma~\ref{lem:function-pi}, 
$t=\pi^{inv}_S(x)\in S$ with probability $1$, since $\Pr\{x\in Z_0\}=0$ (see Lemma~\ref{lem:function-pi}).

To show that property (b) is satisfied, we verify that for every $u$ in $\ball(z, R)$, Condition I or Condition II of Lemma~\ref{lem:item-b-helper} holds.

Pick a point $u$ in $\ball(z, R)$. We consider two cases: $\pi_S(d(z,u))> R/2-R_0$ and $\pi_S(d(z,u))\leq R/2-R_0$. We prove that $u$ satisfies Condition I of Lemma~\ref{lem:item-b-helper} in the former case and Condition II in the latter case.

\noindent \textbf{First case: $\pi_S(d(z,u)) > R/2-R_0$}. Write,
$$\Pr\{t\geq d(z,u)-R_0\} = 
\Pr\{x\geq \pi_S(d(z,u)-R_0)\}.$$
Since $\pi_S$ is a 1-Lipschitz function, we have
$$\pi_S(d(z,u)-R_0)\geq \pi_S(d(z,u))-R_0\geq R/2-2R_0.$$
Therefore,
$$\Pr\{t\geq d(z,u)-R_0\} \leq 1 - F(R/2-2R_0).$$
We prove the following claim.
\begin{claim}\label{cl:apx:tailF} We have
$$1 - F(R/2-2R_0)\lesssim \beta^{q+1}.$$
\end{claim}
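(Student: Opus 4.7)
The plan is a direct computation using the explicit form of the cumulative distribution function $F$. Recall from Definition~\ref{def:F} that
$$F(x) = \frac{1-e^{-x/R_0}}{1-e^{-R/(2R_0)}},$$
so
$$1 - F(x) = \frac{e^{-x/R_0} - e^{-R/(2R_0)}}{1-e^{-R/(2R_0)}}.$$
The key observation is that the parameters are calibrated so that $R/(2R_0) = D_\beta/2 = (q+1)\ln(1/\beta)$, and hence $e^{-R/(2R_0)} = \beta^{q+1}$. This is what makes the exponential tail match the desired $\beta^{q+1}$ bound.

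Next I would substitute $x = R/2 - 2R_0$ to get $e^{-x/R_0} = e^{-R/(2R_0)+2} = e^2\,\beta^{q+1}$. Plugging into the expression above yields
$$1 - F(R/2 - 2R_0) = \frac{(e^2 - 1)\,\beta^{q+1}}{1-\beta^{q+1}}.$$
Since $\beta \leq \beta^*_q < 1$ (and in fact $\beta^*_q$ is bounded away from $1$ by an absolute constant), the denominator $1-\beta^{q+1}$ is bounded below by a positive absolute constant. Therefore $1 - F(R/2 - 2R_0) \leq C\cdot \beta^{q+1}$ for some absolute constant $C$, which is exactly the claim.

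There is no real obstacle here; the only thing to be careful about is checking the algebra of the exponent $-(R/2 - 2R_0)/R_0 = -R/(2R_0) + 2$ and using $R = R_0 D_\beta$ with $D_\beta = 2(q+1)\ln(1/\beta)$ to collapse $e^{-R/(2R_0)}$ to $\beta^{q+1}$. The bound of the denominator away from zero follows immediately from the standing assumption $\beta \leq \beta^*_q$.
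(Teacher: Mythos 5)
Your proof is correct and follows essentially the same computation as the paper: expand $1-F(R/2-2R_0)$ using the explicit formula, observe that $e^{-R/(2R_0)} = e^{-D_\beta/2} = \beta^{q+1}$, and bound the denominator $1-\beta^{q+1}$ away from zero using $\beta \leq \beta^*_q$. The only cosmetic difference is that you manipulate $1-F$ directly while the paper first simplifies $F$; the algebra and the final expression $\frac{(e^2-1)\beta^{q+1}}{1-\beta^{q+1}}$ are identical.
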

\begin{proof} Write:
$$F(R/2-2R_0)
=\frac{1-e^\frac{-R}{2R_0}\cdot e^\frac{2R_0}{R_0}}{1-e^\frac{-R}{2R_0}}
=\frac{1-e^2 e^{\nicefrac{-D_{\beta}}{2}}}{1-
e^{\nicefrac{-D_{\beta}}{2}}}.
$$
Note that $e^{\nicefrac{-D_{\beta}}{2}}=
\beta^{q+1}$.
Then,
$$1 - F(R/2-2R_0) = \frac{(e^2 -1)}{1-\beta^{q+1}}
\cdot \beta^{q+1}.$$
Since the denominator of the right hand side is greater than $\nicefrac{1}{2}$ (recall that we assume that $\beta$ is sufficiently small),
we have
$1 - F(R/2-2R_0) \lesssim \beta^{q+1}$. \end{proof}
Claim~\ref{cl:apx:tailF} finishes the analysis of the first case, since $|\ball(z,t)|/|\ball(z,R_0)|\geq 1$ for every value of $t\geq R_0$.

\noindent \textbf{Second case: $\pi_S(d(z,u))\leq R/2-R_0$}. In this case, for every $v\in \ball(u,R_0)$, we have
$$\pi_S(d(z,v))\leq \pi_S(d(z,u)+R_0) \leq R/2.$$
Here, we used that $\pi_S$ is a 1-Lipschitz function. We claim that inequality~(\ref{eq:helper-b2}) holds for every two points 
$v_1,v_2\in X$ with $\pi_S(d(z,v_1)),\pi_S(d(z,v_2))\leq R/2$ and  $d(v_1,v_2)\leq R_0$. In particular, it holds for $v_1=u$ and $v_2=v$. Without loss of generality assume, that $d(z,v_1)\leq d(z,v_2)$.
Then,
\begin{align*}
\Pr\{\delta_P(v_1,v_2)=1\} &= \Pr\{d(z,v_1)\leq t < d(z,v_2)\}\\
&=\Pr\{\pi_S(d(z,v_1))\leq x < \pi_S(d(z,v_2))\}\\
&= F(\pi_S(d(z,v_2))) - F(\pi_S(d(z,v_1))).
\end{align*}
Here, we used that random variable $x$ has distribution function $F$. 
We show the following claim.
\begin{claim}\label{cl:ineq-on-F}
For all $x_1 \leq x_2$ in the range $[0,R/2]$, we have 
$$F(x_2)-F(x_1) \leq D_{\beta}\cdot \frac{(x_2-x_1)}{R}\cdot \big(1 - F(x_1) + 2\beta^{q+1}\big). $$
\end{claim}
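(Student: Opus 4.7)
The plan is to prove Claim~\ref{cl:ineq-on-F} by a direct computation using the explicit formula for the cumulative distribution function $F$ given in Definition~\ref{def:F}, together with the elementary inequality $1 - e^{-y} \leq y$ for $y \geq 0$. The key identity I will exploit is that $R/R_0 = D_\beta$, and consequently $e^{-R/(2R_0)} = e^{-D_\beta/2} = \beta^{q+1}$, which explains where the additive error term $2\beta^{q+1}$ in the claim comes from.

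First I would write
\begin{equation*}
F(x_2) - F(x_1) = \frac{e^{-x_1/R_0} - e^{-x_2/R_0}}{1 - e^{-R/(2R_0)}} = \frac{e^{-x_1/R_0}\bigl(1 - e^{-(x_2-x_1)/R_0}\bigr)}{1 - e^{-R/(2R_0)}}.
\end{equation*}
Applying $1 - e^{-y} \leq y$ with $y = (x_2 - x_1)/R_0 \geq 0$ and using $1/R_0 = D_\beta/R$ gives
\begin{equation*}
F(x_2) - F(x_1) \leq \frac{D_\beta(x_2 - x_1)}{R} \cdot \frac{e^{-x_1/R_0}}{1 - e^{-R/(2R_0)}}.
\end{equation*}

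Next, I would relate the fraction $e^{-x_1/R_0}/(1 - e^{-R/(2R_0)})$ to $1 - F(x_1)$. A direct computation from Definition~\ref{def:F} shows
\begin{equation*}
1 - F(x_1) = \frac{e^{-x_1/R_0} - e^{-R/(2R_0)}}{1 - e^{-R/(2R_0)}},
\end{equation*}
so that
\begin{equation*}
\frac{e^{-x_1/R_0}}{1 - e^{-R/(2R_0)}} = 1 - F(x_1) + \frac{e^{-R/(2R_0)}}{1 - e^{-R/(2R_0)}}.
\end{equation*}
Now I substitute $e^{-R/(2R_0)} = \beta^{q+1}$. Since $\beta$ is taken sufficiently small (i.e.\ $\beta \leq \beta^*_q$), we have $1 - \beta^{q+1} \geq 1/2$, hence $e^{-R/(2R_0)}/(1 - e^{-R/(2R_0)}) \leq 2\beta^{q+1}$.

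Combining the two displays yields
\begin{equation*}
F(x_2) - F(x_1) \leq \frac{D_\beta(x_2-x_1)}{R} \bigl(1 - F(x_1) + 2\beta^{q+1}\bigr),
\end{equation*}
which is exactly the claimed inequality. There is no real obstacle here: the proof is essentially a one-line algebraic manipulation, and the only point that requires care is identifying $\beta^{q+1}$ with $e^{-D_\beta/2}$ via the definition $D_\beta = 2(q+1)\ln(1/\beta)$, so that the additive error term $2\beta^{q+1}$ emerges naturally from the ``tail'' $e^{-R/(2R_0)}$ of the truncated exponential distribution.
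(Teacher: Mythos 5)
Your proof is correct and follows essentially the same line of reasoning as the paper's: both reduce to the intermediate bound $F(x_2)-F(x_1)\le \frac{D_\beta(x_2-x_1)}{R}\cdot\frac{e^{-x_1/R_0}}{1-e^{-R/(2R_0)}}$ (you via factoring and $1-e^{-y}\le y$, the paper via bounding $\int F'$ by its maximum on $[x_1,x_2]$, which are interchangeable), and then both use the identical algebraic split of $\frac{e^{-x_1/R_0}}{1-e^{-R/(2R_0)}}$ into $1-F(x_1)$ plus a tail term $\frac{\beta^{q+1}}{1-\beta^{q+1}}\le 2\beta^{q+1}$.
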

\begin{proof}
We have
\begin{align*}
F(x_2)-F(x_1) &= \int_{x_1}^{x_2} F'(x) dx\\
&\leq (x_2 - x_1) \max_{x\in [x_1,x_2]}F'(x)\\
&= (x_2 - x_1) \cdot \frac{e^{-\nicefrac{x_1}{R_0}}/R_0}{1-e^{-\nicefrac{R}{2R_0}}}\\
&= D_{\beta}\cdot \frac{(x_2 - x_1)}{R} \cdot \frac{e^{-\nicefrac{x_1}{R_0}}}{1-e^{-\nicefrac{R}{2R_0}}}.
\end{align*}
Here, we used that $R_0 = R/D_{\beta}$. We now need to upper bound the third term on the right hand side:
\begin{align*}
\frac{e^{-\nicefrac{x_1}{R_0}}}{1-e^{-\nicefrac{R}{2R_0}}}
&= 1 - \frac{(1 -
e^{-\nicefrac{R}{2R_0}}) -
e^{-\nicefrac{x_1}{R_0}
}}{1-e^{-\nicefrac{R}{2R_0}}}\\
&= 1 - F(x_1) +
\frac{e^{-\nicefrac{R}{2R_0}}}{1-e^{-\nicefrac{R}{2R_0}}}.
\end{align*}
As in Claim~\ref{cl:apx:tailF}, let us use that $e^{-\nicefrac{R}{2R_0}} = \beta^{q+1}$ and $1-\beta^{q+1}\geq 1/2$ to get
$$\frac{e^{-\nicefrac{x_1}{R_0}}}{1-e^{-\nicefrac{R}{2R_0}}}\leq 1 - F(x_1)+2\beta^{q+1}.
$$
Combining the bounds above, we get the following inequality:
$$
F(x_2)-F(x_1)\leq
D_{\beta}\cdot\frac{x_2 - x_1}{R} \cdot    
\Big(1 - F(x_1) + 2\beta^{q+1} \Big).
$$

\end{proof}
Using Claim~\ref{cl:ineq-on-F} and the inequality
\begin{align*}
\pi_S(d(z,v_2))) - \pi_S(d(z,v_1)&\leq d(z,v_2) - d(z,v_1)\\
&\leq d(v_1,v_2),
\end{align*}
we derive the following upper bound
\begin{multline*}
\Pr\{\delta_P(v_1,v_2)=1\} \leq \\ \leq D_{\beta}
\frac{d(v_1,v_2)}{R}\cdot \big(1 - F(\pi_S(d(z,v_1)))) + 2\beta^{q+1}\big).
\end{multline*}
Then,
\begin{align*}
\Pr\{\vee_P(v_1,v_2) = 1\} &=  \Pr\{d(z,v_1)\leq t\}\\
&= 1- F(\pi_S(d(z,v_1))).
\end{align*}
Therefore,
\begin{align*}
\Pr\{\delta_P(v_1,v_2)=1\} 
- D_{\beta}\frac{d(v_1,v_2)}{R}\cdot \Pr\{\vee_P(v_1,v_2) = 1\}
\leq 2D_{\beta}
\frac{d(v_1,v_2)}{R}\beta^{q+1}.
\end{align*}

Thus, the left hand side of (\ref{eq:helper-b2}) is upper bounded by 
$$2D_{\beta}\cdot\beta^{q+1} \cdot \frac{d(v_1,v_2)}{R}\leq 
2\beta^{q+1}.$$
Here, we use that $d(v_1,v_2)\leq R_0$ and $R_0=R/D_{\beta}$.

\bibliographystyle{icml2021}
\bibliography{references}

\appendix

\clearpage

\section{Proof of Lemma~\ref{lem:Phi-S}}\label{sec:proof-lem-Phi-S}
We first prove Lemma~\ref{lem:Phi-S} for the case when 
$S$ is a measure zero set. Specifically, we show the following lemma.
\begin{lemma}\label{lem:Phi}
Suppose, a non-decreasing function $\Phi:[0,R]\to\bbR$ with $\Phi(0) = 1$ satisfies the following inequality for all $t\in [0,R-r]\setminus Y_0$,
where set $Y_0$ has measure zero:
\begin{equation}\label{ineq:Phi}
\Phi(t+r)\geq \Phi(t) + \gamma \int_0^{t} \Phi(x) dx,
\end{equation}
for some $R>0$, $r \in (0,R/2]$ and $\gamma \in (0,1/r]$, then 
$\Phi(t) \geq \max \{e^{\eta t - 1},1\}$ for all $t\in [0,R]$, where
$\eta = \sqrt{\nicefrac{\gamma}{(e-1)r}}$. Consequently, we have $\Phi(R)\geq e^{\eta R-1}$.
\end{lemma}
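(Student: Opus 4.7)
My plan is a direct sub/super-solution comparison: I will show that the candidate lower bound $\psi(t) := \max\{1, e^{\eta t - 1}\}$ is itself a subsolution of the functional inequality, and then establish $\Phi \geq \psi$ by induction on intervals of length $r$. As a preliminary reduction, note that $\gamma \leq 1/r$ gives $\eta r = \sqrt{\gamma r/(e-1)} \leq 1/\sqrt{e-1} < 1$; the convexity bound $e^x - 1 \leq (e-1)x$ for $x \in [0,1]$ then yields the scalar inequality $\eta(e^{\eta r} - 1) \leq (e-1)\eta^2 r = \gamma$, equivalently $1 + \gamma/\eta \geq e^{\eta r}$. This scalar inequality will drive every case analysis below, and $\eta r \leq 1$ (hence $r \leq 1/\eta$) means the flat part of $\psi$ covers all of $[0, r]$.

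\textbf{Step 1 (Subsolution property).} I will verify $\psi(t+r) \leq \psi(t) + \gamma \int_0^t \psi(x)\,dx$ for every $t \in [0, R-r]$, splitting into cases based on the position of $t$ and $t+r$ relative to the kink point $1/\eta$. The fully flat case $t + r \leq 1/\eta$ is trivial. The fully exponential case $t \geq 1/\eta$ reduces directly to $1 + \gamma/\eta \geq e^{\eta r}$ after the routine computation $\int_0^t \psi(x)\,dx = (1/\eta)\,e^{\eta t - 1}$. The transitional case $t \in [1/\eta - r, 1/\eta]$ is the most delicate: set $f(t) := (1 + \gamma t) - e^{\eta(t+r)-1}$; a quick check gives $f(1/\eta - r) = \gamma(1/\eta - r) \geq 0$ and $f(1/\eta) = 1 + \gamma/\eta - e^{\eta r} \geq 0$ at the endpoints, and $f''(t) = -\eta^2 e^{\eta(t+r)-1} < 0$ forces $f \geq 0$ throughout by concavity.

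\textbf{Step 2 (Inductive comparison).} I will show $\Phi(t) \geq \psi(t)$ on $[0, kr]$ by induction on $k$. The base case $k = 1$ is immediate: $\psi \equiv 1$ on $[0, r]$ since $\eta t \leq \eta r \leq 1$ forces $e^{\eta t - 1} \leq 1$, while $\Phi(t) \geq \Phi(0) = 1$ by monotonicity (no use of the functional inequality here). For the inductive step, pick $t \in (kr, (k+1)r]$ with $t - r \notin Y_0$ and chain the hypothesis, the inductive hypothesis, and Step 1:
\[
\Phi(t) \geq \Phi(t-r) + \gamma \int_0^{t-r} \Phi(x)\,dx \geq \psi(t-r) + \gamma \int_0^{t-r} \psi(x)\,dx \geq \psi(t).
\]
For $t$ with $t - r \in Y_0$, choose a sequence $t_j \uparrow t$ with $t_j - r \notin Y_0$ and $t_j > kr$ (possible since $Y_0$ has measure zero); monotonicity of $\Phi$ yields $\Phi(t) \geq \Phi(t_j) \geq \psi(t_j)$, and continuity of $\psi$ upgrades this to $\Phi(t) \geq \psi(t)$ in the limit. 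Iterating until $kr \geq R$ covers all of $[0,R]$ and in particular gives $\Phi(R) \geq e^{\eta R - 1}$.

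\textbf{Main obstacle.} The crux is the transitional case in Step 1, where $\psi$ has a kink. The concavity argument cleanly sidesteps a direct slope comparison between the linear function $1 + \gamma t$ and the exponential $e^{\eta(t+r) - 1}$, which would otherwise require case-splitting on whether $\gamma \gtrless \eta$. Once the subsolution property is established, Step 2 is a standard induction, and the measure-zero exceptional set $Y_0$ is absorbed cheaply by pairing monotonicity of $\Phi$ with continuity of $\psi$.
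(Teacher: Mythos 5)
Your proof is correct and follows the same essential strategy as the paper's: induct on intervals of length $r$, with the induction driven by the scalar inequality $1 + \gamma/\eta \geq e^{\eta r}$, which the convexity bound $e^x \leq 1 + (e-1)x$ on $[0,1]$ gives for free from $\gamma \leq 1/r$. What you do differently is package the claimed lower bound as a comparison function $\psi(t) = \max\{1, e^{\eta t - 1}\}$ and verify the subsolution inequality $\psi(t+r) \leq \psi(t) + \gamma\int_0^t \psi$ once, for all $t$, before running the induction. This buys you rigor precisely where the paper's argument is loosest: the transitional window $t \in [1/\eta - r, 1/\eta]$. In the paper's chain, the inductive bound $\Phi(x) \geq e^{\eta x - 1}$ is applied to the term $\int_{1/\eta}^t \Phi$ even when $t < 1/\eta$, where this integral has negative orientation and the pointwise bound (together with $\Phi \geq 1 \geq e^{\eta x - 1}$ on $[t, 1/\eta]$) actually flips the inequality, so the first inductive increment (from $[0, 1/\eta]$ to $[0, 1/\eta + r]$) is not airtight as written. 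Your concavity argument for $f(t) = 1 + \gamma t - e^{\eta(t+r)-1}$ on $[1/\eta - r, 1/\eta]$ — checking nonnegativity at the two endpoints and $f'' < 0$ — is exactly the missing step, and it avoids a case-split on whether $\gamma \gtrless \eta$. Both proofs dispose of the measure-zero exceptional set $Y_0$ identically, by approaching an excluded $t$ from below along a sequence $t_j \uparrow t$ with $t_j - r \notin Y_0$ and combining monotonicity of $\Phi$ with continuity of the lower bound.
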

\begin{proof} Since
$\Phi(0) = 1$ and $\Phi(t)$ is non-decreasing, we have $\Phi(t)\geq 1$ for all $t\geq 0$. We now prove that
$\Phi(t)\geq e^{\eta t-1}$. We establish this inequality by induction. The inductive hypothesis is that this inequality holds
for $t\in [0, 1/\eta + ir]\cap [0,R]$ for integer $i\geq 0$. For $t\leq 1/\eta$,  we have $\Phi(t) \geq 1 > e^{\eta t -1}$. Thus, the inductive hypothesis holds
for $i=0$. Suppose, it holds for $i$, we prove it for $i+1$. 

First, consider an arbitrary $t^*\in [1/\eta, 1/\eta + (i+1)r]\cap [0,R]\setminus (Y_0 + r)$, where $Y_0+r$ is the set $Y_0$ shifted right by $r$.
Let $t=t^*-r$. Note that $t > 0$, since $r < \nicefrac{1}{\eta}$. Also, $t\notin Y_0$.
Then, by the inductive hypothesis, we have $\Phi(x)\geq e^{\eta x-1}$ for all $x\in [1/\eta,t]$.
Using Inequality~(\ref{ineq:Phi}), we obtain the following bound
\begingroup
\allowdisplaybreaks
\begin{align*}
\Phi(t^*) &= \Phi(t+r)
\\ 
&\geq \Phi(t) + \gamma \int_0^{1/\eta} \Phi(x) dx +
\gamma \int_{1/\eta}^{t} \Phi(x) dx 
\\
&\geq e^{\eta t-1} + \gamma \int_0^{1/\eta} 1\, dx +\gamma \int_{1/\eta}^{t} e^{\eta x-1} \,dx\\
&= e^{\eta t - 1} + \nicefrac{\gamma}{\eta}  + \nicefrac{\gamma}{\eta}\cdot (e^{\eta t-1} - 1)\\
&= e^{\eta t - 1}(1+ \nicefrac{\gamma}{\eta}).
\end{align*}
\endgroup
Since $\eta = \sqrt{\nicefrac{\gamma}{(e-1)r}}$, we have $\nicefrac{\gamma}{\eta} = (e-1)\eta r$. Now, using the
inequality $e^x\leq 1 + (e-1)x$ for $x\in [0,1]$, we get
\begin{align*}
\Phi(t^*)& \geq e^{\eta t - 1}(1+ \nicefrac{\gamma}{\eta}) =e^{\eta t - 1}(1+ (e-1)\eta \delta)\\
&\geq
e^{\eta t - 1}\cdot e^{\eta r} = 
e^{\eta (t+r)-1}=e^{\eta t^*-1}.
\end{align*}
To finish the proof, we need to show that 
$\Phi(t^{**})\geq e^{\eta t^{**}-1}$ for 
$t^{**}\in [1/\eta, 1/\eta + (i+1)r]\cap [0,R]\cap (Y_0 + r)$. Since $Y_0+r$ has measure zero, there exists an increasing sequence $t^*_k$ of numbers in $[0, 1/\eta + (i+1)r]\cap ([0,R]\setminus (Y_0 + r))$  that tends to $t^{**}$ as $k\to \infty$.
Using that $\Phi$ is a non-decreasing function and $e^{\eta t -1}$ is a continuous function, we have
$$\Phi(t^{**})\geq \lim_{k \to \infty} \Phi(t^*_k) \geq \lim_{k \to \infty} 
e^{\eta t^{*}_k-1} = 
e^{\eta t^{**}-1}.$$
\end{proof}
We now show that Lemma~\ref{lem:Phi} implies Lemma~\ref{lem:Phi-S}. 
Loosely speaking, in the proof, we shift all intervals from the set $S$ to the left to obtain a single interval
$[0,\mu(S)]$. We then apply Lemma~\ref{lem:Phi} to the transformed function.
\begin{proof}[Proof of Lemma~\ref{lem:Phi-S}]
Let $\pi_S$ and $\pi_S^{inv}$ be the maps defined in Section~\ref{sec:light-ball}.
Define $\Phi^{*}(t)$ as $\Phi^{*}(t) = \Phi(\pi_S^{inv}(t))$
and let $Y_0$ be a measure zero set as in Lemma~\ref{lem:Phi-S}.
We claim that $\Phi^{*}(t)$ satisfies (\ref{ineq:Phi}) for all $t\in [0,\pi(S)]\setminus Y_0$. Fix 
$t\in [0,\pi(S)]\setminus Y_0$. Write
$$\Phi^*(t+r) = \Phi(\pi^{inv}_S(t+r)) \geq
\Phi(\pi^{inv}_S(t)+r).
$$
Here, we used that (a)
$\pi^{inv}_S(t+r)\geq \pi^{inv}_S(t)+r$ and (b)
$\Phi$ is a monotone function. By
Lemma~\ref{lem:Phi-S}, $\pi^{inv}_S(t)\in S$, thus
\begin{align*}
\Phi^*(t+r) &\geq \Phi(\pi^{inv}_S(t)+r)\\ &\geq 
\Phi(\pi^{inv}_S(t)) + \gamma \int_0^{\pi^{inv}_S(t)} \Phi(x) dx.
\end{align*}
We now observe that $\Phi^*(t) = \Phi(\pi^{inv}_S(t))$ and 
\begin{align*}
\int_0^{\pi^{inv}_S(t)} \Phi(x) dx &\geq 
\int_0^{\pi^{inv}_S(t)} \Phi(x) \cdot \one(x\in S) dx\\
&=
\int_0^{\pi^{inv}_S(t)} \Phi(x) d \pi_S(x) \\
&= \int_0^{t} \Phi^*(x) dx.
\end{align*}
Here, we used that $d\pi_S(x) = \one(x\in S) dx$ and
$\pi_S(\pi^{inv}_S(t))=t$.
Thus, we showed that for all $t \in [0,\pi(S)]\setminus Y_0$, we have 
$$
\Phi^*(t+r) \geq \Phi^*(t) + \int_0^{t} \Phi^*(x) dx.
$$
We now use Lemma~\ref{lem:Phi} with function $\Phi^*$ and $R'=\mu(S)$. We obtain the following inequality:
$$\Phi^*(\mu(S))\geq e^{\eta \mu(S)-1},$$
which concludes the proof of Lemma~\ref{lem:Phi-S}.
\end{proof}

\section{Integrality Gap}\label{Gap_section}
In this section, we present an integrality gap example for the convex program $(P)$ in Figure~\ref{fig:CP}. 


\paragraph{Construction.}
Let $n = 1+ \lceil\sqrt{1/\alpha}\rceil$. Consider a complete graph on $n$ vertices. Let $P$ be a path of length $n-1$. Denote its endpoints by $s$ and $t$ and the set of its edges by $E_P$. All edges in $P$ are positive edges of weight 1. Edge $(s,t)$ is a negative edge of weight 1. All other edges are positive edges of weight $\alpha$. 

\paragraph{The cost of the integral solution} Clearly, every integral solution $\cal P$ should violate some edge $(u,v)\in P \cup \{(s,t)\}$ (since all these edges cannot be satisfied simultaneously).
Thus, $\disagree_u(\calP, E^+, E^-) \geq 1$ and $\|\disagree(\calP, E^+, E^-)\|_p \geq 1$.

\paragraph{The cost of the CP solution.}
We define the CP solution as follows.  
Denote the distance between $u$ and $v$ along $P$ by $\mathrm{dist}_P(u,v)$. Let $x_{uv} = \mathrm{dist}_P(u,v)/(n-1)$. 
Note that $x_{st} = 1$.
The values of variables $y_u$ are determined by constraints (P1) of the convex program.

Now we upper bound the contribution of every edge $(u,v)$ (incident on $u$) to $y_u$ in formula (P1).
The contribution of $(u,v)\in E_P$ is $w_{uv} x_{uv} = 1 \cdot 1/(n-1)$;
the contribution of edge $(s,t)$ is
$w_{st}(1-x_{st}) = 0$ (whether or not it is incident on $u$), and the contribution of every other edge $(u,v)$
is $w_{uv}x_{uv} \leq \alpha$.
Since every vertex $u$ is incident on at most 2 edges from $E_P$, we have
$y_u \leq 2 /(n-1) + \alpha n \lesssim \sqrt{\alpha}$. Now, 
$$\|y\|_p \leq n^{1/p} \cdot \max_u |y_u| \lesssim n^{1/p} \alpha^{1/2} \lesssim {\alpha}^{1/2-1/(2p)}.$$

\paragraph{Integrality gap}
We conclude that the integrality gap is at least $\Omega((\nicefrac{1}{\alpha})^{1/2-1/(2p)})$.

\end{document}